\def\@definitionautorefname{Definition}
\title{Coordination via Interaction Constraints I:\\
 Local Logic}
\author{Dave Clarke
\institute{Dept.~Computer Science, Katholieke Universiteit Leuven, \\
Celestijnenlaan 200A,\\
    3001 Heverlee, Belgium}
\email{dave.clarke@cs.kuleuven.be}
\and
Jos\'{e} Proen\c{c}a\thanks{Supported by FCT grant 22485 - 2005, Portugal.}
\institute{CWI, \\ Science Park 123, \\ 1098 XG Amsterdam, The Netherlands}
\email{jose.proenca@cwi.nl}
}
\begin{document}

\maketitle

    \begin{abstract}
	Wegner describes coordination as constrained interaction. 
	We take this approach literally and define a coordination model based 
	on \emph{interaction constraints}
	and  partial, iterative and interactive constraint
	satisfaction.
	Our model captures behaviour described in terms of \emph{synchronisation}
	and \emph{data flow constraints}, plus various modes of interaction 
	with the outside world provided
	by	\emph{external constraint symbols}, \emph{on-the-fly constraint}
	generation, and \emph{coordination variables}.
	Underlying our approach is an engine performing 
	(partial) constraint satisfaction of the sets of constraints.		
	Our model extends previous work on three counts: firstly, a more advanced 
	notion of external interaction is offered; secondly, our approach enables
	local satisfaction of constraints with appropriate partial solutions,
	avoiding global synchronisation over the entire constraints set; and,
	as a consequence, constraint satisfaction can finally occur concurrently,
	and multiple parts of a set of constraints can be solved
	and interact with the outside world in an asynchronous manner,  
	unless synchronisation is required by the
	constraints.
	
	This paper describes the underlying logic, which enables a notion of \emph{local solution},
	and relates this logic to the more global approach of our previous work based on 
	classical logic.
    \end{abstract}

\section{Introduction}

Coordination models and languages~\cite{coordination} address the complexity
of systems of concurrent, distributed, mobile and heterogeneous components, by
separating the parts that perform the computation (the components) from the
parts that ``glue" these components together. The glue code offers a layer
between components to intercept, modify, redirect, synchronise communication
among components, and to facilitate monitoring and managing their resource
usage, typically separate from the resources themselves.

Wegner describes coordination as constrained interaction~\cite{wegner}. We
take this approach literally and represent coordination using constraints.
Specifically, we take the view that a component connector specifies a (series
of) constraint satisfaction problems, and that valid interaction between a
connector and its environment corresponds to the solutions of such
constraints.

In previous work~\cite{reo:deconstructing} we took the channel-based coordination model 
\reo~\cite{reo:primer},
extracted  constraints underlying each channel and their composition, and 
formulated  behaviour as a constraint satisfaction problem. There we 
identified that interaction consisted of two phases:  solving  and 
updating constraints.
Behaviour depends upon the current state. The semantics were described
\emph{per-state} in a series of \emph{rounds}. Behaviour in a particular step
is phrased in terms of \emph{synchronisation and data flow constraints}, which
describe the synchronisation and the data flow possibilities of participating
ports. \emph{Data flow} on the end of 
a channel occurs when a single datum is passed through that end.
Within a particular round data flow may occur on some number of ends; this is
equated with the notion of \emph{synchrony}.
The constraints were based on a
synchronisation and  a data flow variable for each port.
Splitting the constraints into 
synchronisation and data flow
constraints is very natural, and it closely resembles the constraint automata
model~\cite{reo:ca06}.
These constraints are solved during the solving phase. Evolution over time is captured
by incorporating state information into the constraints, and updating the
state information between solving phases.
Stronger motivation for the use of constraint-based techniques for the \reo
coordination model can be found in our previous
work~\cite{reo:deconstructing}. 
By abstracting from the channels metaphor and using only the constraints, 
the implementation is free to optimise
constraints, eliminating costly infrastructure, such as unnecessary channels. Furthermore,
constraint-solving techniques are well studied in the literature, and there
are heuristics to search efficiently for  solution, offering 
significant improvement of other  models underlying Reo implementations.
To increase the expressiveness and
usefulness of the model, we added external state variables, external function symbols and
external predicates to the model. These external symbols enable modelling of a wider range of
primitives whose behaviour cannot expressed by constraints, 
either because the internal constraint language is not expressive enough,
or to wrap external entities, such as those with externally maintained state.
The constraint satisfaction process was extended with means for interacting
with external entities to resolve external function symbols and predicates.

In this paper, we make three major contributions to the model: 
\begin{description}
  \item[Partiality] Firstly, we allow solutions for the constraints and the
  set of known predicates and functions to be partial~\cite{partiallogic}. We
  introduce a minimal notion of partial solution which admits solutions only
  on relevant parts (variables) of a connector. External symbols that are only
  discovered on-the-fly are more faithfully modelled in a partial setting.

  \item[Locality] Secondly, we assume a \emph{do nothing} solution for the
  constraints of each primitive exists, where no data is communicated. This
  assumption, in combination with partiality, allows certain solutions for
  part of a connector to be consistently extended to solutions for the full
  connector. Furthermore, our notion of locality 
enables independent parts of the connector to  evolve concurrently.
  
  \item[Interaction] Thirdly, we formalise the constraint satisfaction process
  with respect to the interaction with the external world, and we introduce
  external constraint symbols. These can be seen as lazy constraints, which
  are only processed by the engine on demand, by consulting an external
  source. These can be used to represent, for example, a stream of possible
  choices, which are requested on demand, such as the pages of different
  flight options available on an airline booking web page.
  \end{description}

\paragraph{Organization}
The next section gives an overview of the approach 
taken in this paper, providing a global picture and relating the different 
semantics we present for our constraints. The
rest of the
paper is divided into two main parts. The first part describes how
constraints are defined, and defines four different semantics for variants of the
constraint language and relates them. We present
a classical semantics in \autoref{sec:constrsat} and two partial semantics in \autoref{sec:partiality},
and exploit possible concurrency by allowing local solutions in
\autoref{sec:locality}. The second part introduces a constraint-based engine
to perform the actual coordination, search and applying solutions for the
constraints. We describe stateful primitives in \autoref{sec:state}, and 
add interaction in
\autoref{sec:interaction}. We give some conclusions about this work in
\autoref{sec:conclusion}.

\section{Coordination = Scheduling + Data Flow}

We view coordination as a constraint satisfaction problem, where solutions for the
constraints yield how data should be communicated among components. More
specifically, solutions to the constraints describe \emph{where} and
\emph{which} data flow. Synchronisation variables describe the where, and
data flow variables describe the which.
With respect to our previous work~\cite{reo:deconstructing}, we move from a
classical semantics to a local semantics, where solutions address only part
of the connector, as only a relevant subset of the variables of the
constraints are required for solutions.
We do this transformation for classical to local semantics
in a stepwise manner, distinguishing four different
semantics that yield different notions of valid solution $\sigma$, 
mapping synchronisation and data flow variables to appropriate values:
\begin{description}
  \item[Classical semantics] ~     \begin{itemize}
      \item $\sigma$ are always total (for the variables of the connector under consideration);
      \item an explicit value \NOFLOW\ is added to the data domain to
      represent the data value when there is no data flow;
      \item an explicit \emph{flow axiom} is added to constraints to ensure the proper
      relationship between synchronisation variables and data flow variables; and
      \item constraints are solved globally across the entire `connector'.
    \end{itemize}
    
  \item[Partial semantics] ~     \begin{itemize}
      \item $\sigma$ may be partial, not binding all variables in a constraint;
      \item the \NOFLOW\ value is removed and modelled by leaving the data flow
      variable undefined; and       
	\item as the previous flow axiom is no longer expressible,  the relationship
      between synchronisation and data flow variables is established by a new
      meta-flow axiom, which acts after constraints have been solved to  filter invalid solutions.
    \end{itemize}
    
  \item[Simple semantics] ~
    \begin{itemize}
      \item $\sigma$ is partial, and the semantics is such that only certain
      ``minimal" solutions,  which define only the necessary
      variables, are found; and
      \item the meta-flow axiom is expressible in this logic, so a \emph{simple} flow axiom can again be added to the constraints.

    \end{itemize}
    
  \item[Local semantics] ~     \begin{itemize}
      \item formul\ae\ are partitioned       into
        blocks, connected via shared variables;
      \item each block is required to always admit a \emph{do nothing} 
      solution;
      \item some solutions in a block can be found without looking at its
      neighbours, whenever there is no-flow on its \emph{boundary}
      synchronisation variables;
      \item two or more such solutions  are locally  compatible;
      \item blocks can be merged in order to find more solutions,
        in collaboration, when existing solutions do not ensure the no-flow condition over the \emph{boundary} synchronisation variables; and
      \item the search space underlying constraints is smaller
      than in the previous semantics, and there is a high degree of
      locality and concurrency.

    \end{itemize}
\end{description}

We present formal embeddings between these logics, with respect to solutions 
that obey the various (meta-) flow axioms (linking 
solutions for synchronisation and data flow variables). 
We call such solutions \emph{firings}.
The first step is from a classical to a partial semantics. The number of
solutions increases, as new (smaller) solutions also become  valid. We then
move to a simple semantics to regain an expressible flow axiom, where only
some ``minimal" partial solutions are accepted. In the last step we present a
local semantics, where we avoid the need to inspect even more constraints,
namely, we avoid visiting constraints added conjunctively to the system, by
introducing some requirements on solutions to blocks of constraints.

\section{Coordination via Constraint Satisfaction}
\label{sec:constrsat}

In previous work we described coordination in terms of constraint satisfaction.
The main problem with that approach is that the constraints needed to be solved
globally, 
which means that it is not scalable as the basis of an engine for 
coordination. In this section, we adapt the underlying logic and notion
of solution to increase the amount of available locality and concurrency in the constraints.
Firstly, we move from the standard classical interpretation of the logic
to a partial interpretation. This offers some improvement, but the solutions
of a formula need to be filtered using a semantic variant of the flow axiom,
which is undesirable  because filtering them out during the constraint satisfaction 
process could be significantly faster.
We improve on this situation by introducing a simpler notion of solution for formul\ae,
requiring only relevant variables to be assigned. 
This approach avoids post hoc filtering of solutions.
 Unfortunately, even simple solutions still require more
or less global constraint satisfaction. 
Although it is the case that  many constraints
may correspond to no behaviour within parts of the connector---indeed 
all constraints admit such solutions---, 
the constraint satisfier must still visit the constraints to determine
this fact. In the final variant, we simply assume that the no behaviour solution
can be chosen for any constraint not visited by the constraint solver,
and thus the constraint solver can find solutions to constraints without
actually visiting all constraints. This means that more concurrency is
available and different parts of the implicit constraint graph can be
solved independently and concurrently.

We start by motivating our work via an example, and we then describe the
classical approach to constraint satisfaction and its problems, before
gradually refining our underlying logic to a be more amenable to
scalable satisfaction.

\subsection{Coordination of a complex data generator}
\label{sec:example}

We introduce a motivating example, depicted in Figure~\ref{fig:cdg}, where a
\emph{Complex Data Generator (CDG)} sends data to \emph{Client}. Data
communication is controlled via a coordinating connector. The connector
consists of a set of composed coordination building blocks, each with some
associated constraints describing their behavioural possibilities. 
We call these building blocks simply
\emph{primitives}.
The CDG and the Client are also primitives, and play the same coordination
game by providing some constraints reflecting their intentions to read or 
write data.

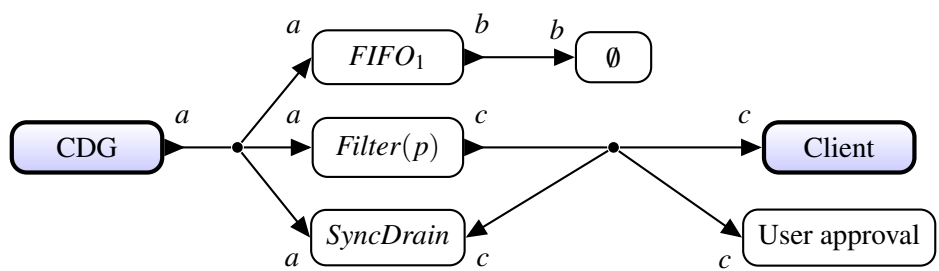
\begin{figure}[htb]
\centering
\begin{tikzpicture}
  [node distance=3cm,
      >=triangle 45,thick,
   dot/.style={circle,inner sep=0.5mm,fill=black},
   height/.style={node distance=1.2cm},
   closer/.style={node distance=2cm}
  ]    
  \node[main,label=10:$a$]         (cdg)    {CDG};
  \node[dot,closer,right of=cdg]  (cdg2)   {};
  \node[prim,closer,right of=cdg2,label=170:$a$,label=10:$c$]
    (filter)   {\filterp};
  \node[prim,height,above of=filter,label=170:$a$,label=10:$b$]
    (fifo)   {\fifo};
  \node[prim, right of=fifo,label=170:$b$,minimum width=10mm]
    (null)   {$\emptyset$};
  \node[prim,height,below of=filter,label=184:$a$,label=-4:$c$]
    (sd)     {\syncdrain};
  \node[dot,right of=filter] (filter2) {};
  \node[main,right of=filter2,label=170:$c$] (res) {Client};
  \node[prim,height,below of=res,label=184:$c$] (user) {User approval};
  
  \draw[>-] (cdg) to (cdg2);
  \draw[->] (cdg2) to (fifo.west);
  \draw[->] (cdg2) to (filter.west);
  \draw[->] (cdg2) to (sd.west);
  \draw[>->] (fifo) to (null);
  \draw[>-] (filter) to (filter2);
  \draw[->] (filter2) to (sd.east);
  \draw[->] (filter2) to (user.west);
  \draw[->] (filter2) to (res.west);

\end{tikzpicture}

\caption{Network of constraints: coordinating a complex data generator.}
\label{fig:cdg}
\end{figure}

\autoref{fig:cdg} uses a graphical notation to depict how the different
primitives are connected. Each box represents a primitive with some associated
constraints, connected to each other via shared variables. For example, the
CDG shares variable $a$ with \fifo, \filterp, and \syncdrain, indicating
that the same data flows through the lines connecting theses primitives
in the figure.
The arrows represent the direction of data flow, thus the value of $a$ is given
by $CDG$ and further constrained by the other attached primitives. 

Most of the coordination primitives are channels from the
\reo coordination
language~\cite{reo:primer}. 
Previous work~\cite{reo:deconstructing}  
described a constraint-based approach to modelling \reo networks. 
Here we forego the graphical representation to emphasise the idea that
a coordinating connector can be seen as a soup of constraints linked by shared variables.
One optimisation we immediately employ is using the same name for ends which are
connected by synchronous channels or replicators.\footnote{Semantically, this view of synchronous channels
and replicators is valid.} Note that in the original description of \reo, nodes
act both as mergers and replicators. This behaviour can be emulated using
merger and replicator primitives, as we have done. The result is a simpler
notion of node, a 1:1 node which both synchronises and copies data from the
output port to the input port.
Primitives act as constraint providers, which are combined until they reach a
consensus regarding where and which data will be communicated. Only then a
possible communication of data takes place, and the primitives update their
constraints.

In the particular case of the example in \autoref{fig:cdg}, there is a
complex data generator (CDG) that can write one of several possible values,
a filter that can only receive (and send) data if it validates a given
predicate $p$, a component (User approval) that approves values based on 
interaction with a user, a destination client that receives the final result, 
and some other primitives that impose further constraints. We will come back to this example
after introducing some basic notions about the constraints.

\paragraph{Notation}
We write $\Data$ to denote a global set of possible data that can flow on the
network. \NOFLOW is a special constant not in $\Data$ that represents no
data flow. \X denotes a set of \emph{synchronisation variables} over $\set{\true,\false}$, $\wh\X =
\set{\wh{x} ~|~ x \in \X}$ a set of \emph{data flow variables} over $\Data \cup \set{\NOFLOW}$, $\dvar{P}$ a
set of predicate symbols, and $\dvar{F}$ a set of function symbols such that 
$\Data\subseteq \dvar{F}$. (Actually, $\Data$ is the Herbrand universe over function
symbols $\dvar{F}$.)
We use the
following variables to range over various domains:
$x\in\X$,
$\wh{x}\in\wh\X$, 
$f\in\dvar{F}$, and 
$p\in\dvar{P}$.
Recall that synchronisation variables $\X$ and data flow variables $\wh\X$ are intimately
related, as one describes whether data flows and the other describes what the data is.

\subsection{Classical Semantics}
\label{sec:classical}

Consider the logic with the following syntax of formul\ae\ ($\psi$) and
terms ($t$):
\begin{eqnarray*}
	\psi & ::= &
 \true ~~|~~ x ~~|~~ \psi_1 \land \psi_2 ~~|~~ \lnot \psi ~~|~~ p(t_1,\ldots,t_n)  \\
  t & ::= & \wh{x} ~~|~~ f(t_1,\ldots,t_n)
\end{eqnarray*}

\noindent
$\True$ is \emph{true}. We assume that one of the internal predicates in
$\dvar{P}$ is equality, which is denoted using the standard infix notation
$t_1 = t_2$. The other logical connectives can be encoded as usual:
$\False\df\neg\True$; 
$\psi_1\lor\psi_2\df\neg(\neg\psi_1\land\neg\psi_2)$;
$\psi_1\rightarrow\psi_2\df\neg\psi_1\vee\psi_2$; and
$\psi_1\leftrightarrow\psi_2\df(\psi_1\rightarrow\psi_2)\wedge 
(\psi_2\rightarrow\psi_1)$. 
Constraints can be easily extended with an existential quantifier, provided
that it does not appear in a negative position, or alternatively, that
it is used only at the top level.

The semantics is based on a relation $\sigma,\I\cmodels\psi$, where
$\sigma$ is a total map from $\dvar{X}$ to $\set{\true,\false}$ and
from $\wh{\dvar{X}}$ to $\Data\cup\set{\NOFLOW}$,
and $\I$ is an arity-indexed total map from $\dvar{P}_n\times\T^{n}$ to 
$\set{\true,\false}$, for each $n\ge 0$, where
$\dvar{P}_n$ is the set of all predicate symbols of arity $n$,
 \T is the set of all possible ground terms
(terms with no variables) plus the constant \NOFLOW. 
The semantics is defined by a satisfaction relation $\cmodels$ defined as 
follows. The function $\mathit{Val}_{\sigma}$ replaces all variables $v$ by 
$\sigma(v)$, and we assume that $\mathit{Val}_{\sigma}(f(t_1,\ldots,t_n)) = \NOFLOW$ whenever $t_i=\NOFLOW$, for some $i\in 1..n$.

\begin{definition}[Classical Satisfaction]
\[
\begin{array}{lcl}
  \sigma,\I \cmodels \True    & \multicolumn{2}{l}{always}\\
  \sigma,\I\cmodels x & \ifff & \sigma(x)=\true \\
  \sigma,\I\cmodels \psi_1\land\psi_2
    &\ifff& \sigma,\I\cmodels\psi_1 \mbox{ and }
            \sigma,\I\cmodels\psi_2 \\
  \sigma,\I\cmodels\lnot\psi & \ifff & \sigma,\I\notcmodels\psi \\
  \sigma,\I\cmodels p(t_1,\ldots,t_n) & \ifff &
        p(\mathit{Val}_{\sigma}(t_1),
        \ldots,\mathit{Val}_{\sigma}(t_n))\mapsto\true\in\I
\end{array}
\]
\end{definition}

The following axiom relates synchronisation and data flow variables,
stating that a synchronisation variable being set to  $\false$ 
corresponds to the corresponding data flow being $\NOFLOW$.

\begin{axiom}[Flow Axiom]
	\begin{equation}
	\lnot x \leftrightarrow \widehat{x} = \NOFLOW
	\tag{flow axiom}
	\end{equation}
\end{axiom}

We introduced  this in our previous approach for coordination via
constraints~\cite{reo:deconstructing}. Every pair
of variables, $x$ and $\wh{x}$, is expected to obey this axiom.
Write $FA(x)$ for the flow axiom for variables $x,\wh{x}$ and
$FA(X)$ for the conjunction $\bigwedge_{x\in X}FA(x)$. Also write $\fv{\psi}$ 
for the free variables of $\psi$, i.e., variables from \X and $\wh\X$ that 
occur in $\psi$.

\begin{definition}[Classical Firing]
	A solution $\sigma$ to constraint $\psi$
	which satisfies the meta-flow axiom
	is called a \emph{classical firing}.
	That is,
	$\sigma$ is a classical firing for $\psi$ if and only if 
	 $\sigma,\I\cmodels \psi\wedge FA(\fv{\psi})$.~
\end{definition}

\newcommand{\myprim}[1]{
\wrap{\begin{tikzpicture}  [node distance=5mm,
   >=triangle 45,thick,
   text height=2mm
  ]
  #1
\end{tikzpicture}
}}

\begin{example}
Recall the example from \autoref{sec:example}. We define the constraints for
each primitive in \autoref{tab:statelessex}. The client does not impose any
constraints on the input data ($\true$), and the \fifo primitive is empty so its
constraints only say that no data can be output. $\mathit{UserAppr}$ is an
external predicate symbol, which must be resolved using external interaction (See
\S~\ref{sec:externallogic}).
Later we extend some of these constraints to capture the notion of state and interaction
(see Table~\ref{tab:interactiveex}). 
The behaviour of the full system is given by the firings for the
conjunction of all constraints.

\begin{table}[htb]
\centering
$\begin{array}{clc}
\toprule
~\hspace{15mm} Primitive \hspace{15mm}~ & \multicolumn{2}{c}{Constraint} \\
\cmidrule(lr){1-1}\cmidrule(lr){2-3}
\myprim{
  \node at (-1.5,0) {~};
  \node[main] (x) at (0,0) {CDG};
  \node at (1.5,0) {a} edge[<-] (x);}
& \psi_1 = &     {a \to (\widehat{a} = \mathtt{d_1} \lor \widehat{a} = \mathtt{d_2} \lor
     \widehat{a} = \mathtt{d_3})} \\
\myprim{
  \node at (-1.5,0) {c} edge [->] (x);
  \node[main] (x) at (0,0) {Client};
  \node at (1.5,0) {~};}
& \psi_2 = &
    { \true }
\\
\myprim{
  \node[prim] (x) at (0,0) {User approval};
  \node at (-1.8,0) {c} edge[->] (x);
  \node at (1.8,0) {~};}
& \psi_3 = &     {c \to \mathit{UserAppr}(\widehat{c})}
\\
\myprim{
  \node[prim] (x) at (0,0) {\syncdrain};
  \node at (-1.5,0) {a} edge[->] (x);
  \node at (1.5,0) {c} edge[->] (x);}
& \psi_4 = &     {a \leftrightarrow c}
\\
\myprim{
  \node[prim] (x) at (0,0) {\filterp};
  \node at (-1.5,0) {a} edge[->] (x);
  \node at (1.5,0) {c} edge[<-] (x);}
& \psi_5 = &
    {\begin{array}{cc}
      c \to a \land       c \to (p(\widehat{c}) \land \widehat{a} = \widehat{c})
      \\ ~\land (a\land p(\wh{a})) \to c
     \end{array}
    }
\\
\myprim{
  \node at (-1.5,0) {a} edge[->] (x);
  \node[prim] (x) at (0,0) {\fifo};
  \node at (1.5,0) {b} edge[<-] (x);}
& \psi_6 = &     {\lnot b } \\
\myprim{
  \node[prim,minimum width=10mm] (x) at (0,0) {$\emptyset$};
  \node at (-1,0) {b} edge[->] (x);
  \node at (1,0) {~};}
& \psi_7 = &     {\lnot b}
\\
\bottomrule
\end{array}$
\caption{List of primitives and their associated constraints, where $\mathtt{d_1},\mathtt{d_2},\mathtt{d_3} \in \Data$.}
\label{tab:statelessex}
\end{table}
\end{example}

Consider the \filterp primitive in \autoref{tab:statelessex}. The flow axiom is given by $FA(a,c)$. The constraint
$c \to a$ can be read as: if there is data flowing on $c$, then there must
also be data flowing on $a$. The second part, $c \to (p(\widehat{c}) \land
\widehat{a} = \widehat{c})$, says that when there is data flowing on $c$ its
value must validate the predicate $p$, and data flowing in $a$ and $c$ must be
the same. Finally, the third part $(a\land p(\wh{a})) \to c$ states that data
that validates the predicate $p$ cannot be lost, i.e., flow on $a$ but not on
$c$. A classical firing for the interpretation \I is $\set{a\mapsto\true,
c\mapsto\true, \wh{a}\mapsto d, \wh{c}\mapsto d}$ whenever $d \in \Data$ is such that
$p(d)\mapsto\true \in \I$. The assignment $\set{c\mapsto\true, \wh{c}\mapsto
\NOFLOW}$ is not a classical firing because it violates the flow axiom, and
because it is not a total map (it refers to neither  $a$ nor $\wh{a}$).

\section{Partiality}
\label{sec:partiality}

The first step towards increasing the amount of available concurrency and
the scalability of our approach is to make the logic partial. This means that
solutions no longer need to be total, so for some $x\in\dvar{X}$ or
$\wh{x}\in\wh{\dvar{X}}$,
$\sigma(x)$ or $\sigma(\wh{x})$ may not be defined. 
In addition, we drop the $\NOFLOW$ value and so $\sigma(\wh{x})$ may 
either map to a value from $\Data$ or be undefined.
The semantics is defined by a satisfaction relation 
$\pmodels$ and a dissatisfaction relation $\copmodels$ defined below.
These state, respectively, when a formula is definite true 
or definitely false.
Partiality is introduced in either the clause for $x$
or for $p(t_1,\ldots,t_n)$, whenever some variable is not
defined in $\sigma$. An assignment $\sigma$ now is a \emph{partial} map 
from synchronisation variables to
$\set{\true,\false}$, and from data flow variables to $\Data$.
Similarly, an interpretation $\I$ is now an arity-indexed 
family of partial map from
 $\dvar{P}_n\times\T^{n}$ to
$\set{\true,\false}$, where \T is the set of all possible ground
terms, and $\mathit{Val}_{\sigma}(f(t_1,\ldots,t_n))=\bot$ whenever
$\mathit{Val}_{\sigma}(t_i)=\bot$, for some $i\in 1..n$. We use $\bot$ to
indicate when such a map is undefined.

\begin{definition}[Partial Satisfaction]
\[
\begin{array}{lcl}
  \sigma,\I \pmodels \True    & \multicolumn{2}{l}{always}\\
  \sigma,\I\pmodels x & \ifff & \sigma(x)=\true \\
  \sigma,\I\pmodels \psi_1\land\psi_2
    &\ifff& \sigma,\I\pmodels\psi_1 \mbox{ and }
            \sigma,\I\pmodels\psi_2 \\
  \sigma,\I\pmodels\lnot\psi & \ifff & \sigma,\I\copmodels\psi \\
  \sigma,\I\pmodels p(t_1,\ldots,t_n) & \ifff &
        p(\mathit{Val}_{\sigma}(t_1), \ldots,\mathit{Val}_{\sigma}(t_n))\mapsto\true\in\I
       \\
\\
  \sigma,\I\copmodels x & \ifff & \sigma(x)=\false \\
  \sigma,\I\copmodels \psi_1\land \psi_2 & \ifff & 
     \sigma,\I\copmodels \psi_1 \mbox{ or } \sigma,\I\copmodels \psi_2 \\
  \sigma,\I\copmodels \lnot \psi & \ifff & \sigma,\I\pmodels \psi \\
  \sigma,\I\copmodels p(t_1,\ldots,t_n) & \ifff &
     p(\mathit{Val}_{\sigma}(t_1), \ldots,\mathit{Val}_{\sigma}(t_n))\mapsto\false\in\I
\end{array}
\]
\end{definition}

\begin{lemma} \label{lem:totalsol} 
	If $\sigma$ and $\I$ are total, then
	either $\sigma,\I\pmodels \psi$ or $\sigma,\I\copmodels \psi$, but it is 
	never undefined.
\end{lemma}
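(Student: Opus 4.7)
The plan is to prove the statement by structural induction on $\psi$, running through each syntactic case of the grammar for formul\ae. Since $\pmodels$ and $\copmodels$ are defined mutually recursively (via the negation clause), a single induction handles both simultaneously, and I would in fact strengthen the induction hypothesis slightly to say that exactly one of $\pmodels$ and $\copmodels$ holds, to make the negation case go through cleanly.

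For the base cases: $\True$ is always satisfied by $\pmodels$, so that case is immediate. For $x$, totality of $\sigma$ means $\sigma(x)\in\{\true,\false\}$, giving $\pmodels$ or $\copmodels$ respectively. For $p(t_1,\ldots,t_n)$, the key subsidiary observation is that when $\sigma$ is total, $\mathit{Val}_\sigma(t)$ is never $\bot$ for any term $t$; this itself follows by a short induction on terms, using the assumption that $\mathit{Val}_\sigma(f(t_1,\ldots,t_n))=\bot$ only when some argument is $\bot$. With $\mathit{Val}_\sigma(t_i)\in\T$ for each $i$ and $\I$ total, the tuple $p(\mathit{Val}_\sigma(t_1),\ldots,\mathit{Val}_\sigma(t_n))$ maps to $\true$ or $\false$ in $\I$, giving $\pmodels$ or $\copmodels$ accordingly.

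For the inductive cases: for $\psi_1\land\psi_2$, by the inductive hypothesis each $\psi_i$ is definite. If both are in $\pmodels$, the conjunction is in $\pmodels$; otherwise at least one is in $\copmodels$ and the conjunction is in $\copmodels$ by the second clause. For $\lnot\psi$, the clauses for $\pmodels$ and $\copmodels$ simply swap the status of $\psi$, so definiteness is preserved. The mutual exclusivity piece of the strengthened hypothesis is needed here to conclude that the resulting disjunct is unique, but it follows routinely in each case from the exclusivity at the atomic level (distinct values of $\sigma(x)$, distinct images under $\I$).

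There is no real obstacle here; the only subtle point is recognising that one should either strengthen the induction to carry exclusivity along, or at the very least handle the negation case by appeal to the mutually recursive definition. The term-level lemma that $\mathit{Val}_\sigma$ produces no $\bot$ on total $\sigma$ is the one auxiliary fact worth stating explicitly before the main induction.
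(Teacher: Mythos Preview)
The paper states this lemma without proof, treating it as an evident fact about the partial semantics. Your structural induction on $\psi$, with the auxiliary observation that $\mathit{Val}_\sigma$ never returns $\bot$ on total $\sigma$ and the strengthened hypothesis that \emph{exactly} one of $\pmodels$ and $\copmodels$ holds, is the standard and correct way to establish it; there is nothing to compare against and no gap in your argument.
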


We need to adapt the flow axiom as it refers explicitly to $\NOFLOW$, which
is no longer available. The obvious change would be to replace
$\NOFLOW$ by partiality, giving (semantically) 
$\sigma(x)=\false \Longleftrightarrow \sigma(\widehat{x}) = \bot$.
But we can do better, permitting $\sigma(x)=\bot$ to also represent 
no data flow. In addition, it is feasible that $\sigma(x)=\true$
with $\sigma(\wh{x})=\bot$ is a valid combination, to cover the case
where the actual value of the data does not matter.
Together, these
give the following \emph{meta-flow axiom}, which is a semantic and not 
syntactic condition.

\begin{axiom}[Meta-Flow Axiom]
	An assignment $\sigma$ obeys the \emph{meta-flow axiom} 
	whenever for all $x\in\dvar{X}$:
	\begin{eqnarray*}
	\sigma(\widehat{x})\neq\bot & \Longrightarrow & \sigma(x) = \True
	\end{eqnarray*}
	Write $MFA(\sigma)$ whenever $\sigma$ obeys the meta-flow axiom.
\end{axiom}

The following table gives all solutions to the meta-flow axiom:

\begin{center}
	\begin{tabular}{c|cccc|cc}
		& \multicolumn{4}{c|}{possible} & \multicolumn{2}{c}{forbidden} \\ \hline
	$x$ &  $\true$ & $\true$ & $\false$ & $\bot$ & $\false$ & $\bot$ \\
	$\wh{x}$ & $d$ & $\bot$ & $\bot$ & $\bot$ & $d$ & $d$ 
	\end{tabular}
\end{center}

For comparison, the following table gives the solutions for the flow axiom:
\begin{center}
	\begin{tabular}{c|cc|cc}
		& \multicolumn{2}{c|}{possible} & \multicolumn{2}{c}{forbidden} \\ \hline
	$x$ &  $\true$ & $\false$ &  $\true$ & $\false$ \\
	$\wh{x}$ & $d$ & $\NOFLOW$ & $\NOFLOW$ & $d$ 
	\end{tabular}
\end{center}

\begin{definition}[Partial Firing]
	A partial solution $\sigma$ to a constraint $\psi$ that satisfies the meta-flow axiom is called a 
	\emph{partial firing}. That is,
	whenever $\sigma,\I\pmodels \psi$ and $MFA(\sigma)$.
\end{definition}

Consider again the constraints of the $\filterp$ primitive in
\autoref{tab:statelessex}. A possible firing for it is
$\set{a\mapsto\true, c\mapsto\false, \wh{a}\mapsto d, \wh{c}\mapsto \NOFLOW}$
where $d \in \Data$ does not validate the predicate $p$. The
equivalent partial solution can be obtained by replacing \NOFLOW by $\bot$.
Therefore, $\set{a\mapsto\true,c\mapsto\false,\wh{a}\mapsto d}$ is
also a partial firing, whenever $p(d)$ does not hold. Note also that
$\set{c\mapsto\false},\I\pmodels a \to c$ holds in the partial setting, yet
$\set{c\mapsto\false},\I\cmodels a \to c$ does not hold in the classical
setting, because the classical satisfaction requires the solutions to be total
mappings of all variables involved.

\subsection[Embeddings: Classical and Partial]{Embeddings: Classical $\leftrightarrow$ Partial}

We can move from an explicit representation of no-flow, namely
$\sigma(x)=\false$ and $\sigma(\wh{x})=\NOFLOW$, to 
an implicit representation using partiality, namely either
$\sigma(\wh{x})=\bot$ and $\sigma(x)=\bot$ or 
$\sigma(x)=\false$, which means
that the constraint solver need not find a value for  $x$ or $\wh{x}$.

\begin{lemma}[Classical to Partial]
  Let $\psi$ be a constraint where $\NOFLOW$ does not occur in $\psi$,
  and $\sigma$ be an assignment where $\dom(\sigma)=\fv{\psi}$. We write
  $\I^\circ$ to represent the interpretation obtained by replacing in \I the
  constant \NOFLOW by $\bot$.
	If $\sigma$ is a classical firing for $\psi$ and the interpretation \I, then
	$\sigma^\circ$ is a partial firing for $\psi$ and the interpretation 
	$\I^\circ$, where $\sigma^\circ$ is defined as follows:
\[\begin{array}{lcl}
	\sigma^\circ(x) & = & \sigma(x) \\
	\sigma^\circ(\wh{x}) &=& \left\{  
		\begin{array}{ll}
			\bot, & \mbox{if }\sigma(\wh{x})=\NOFLOW  \\
			\sigma(\wh{x}), & \mbox{otherwise}
		\end{array}
		\right.
\end{array}
\]
\end{lemma}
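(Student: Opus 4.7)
The plan is to split the proof into two independent parts: first verify that $\sigma^\circ$ satisfies the meta-flow axiom, and then show by structural induction on $\psi$ that $\sigma,\I\cmodels\psi$ implies $\sigma^\circ,\I^\circ\pmodels\psi$. The meta-flow part follows immediately from the flow axiom: if $\sigma^\circ(\wh{x})\neq\bot$, then by the definition of $\sigma^\circ$ we have $\sigma(\wh{x})\neq\NOFLOW$, so the flow axiom $\lnot x \leftrightarrow \wh{x}=\NOFLOW$ forces $\sigma(x)=\true$ and hence $\sigma^\circ(x)=\sigma(x)=\true$, which is exactly what $MFA(\sigma^\circ)$ requires. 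The hypothesis that $\NOFLOW$ does not occur in $\psi$ is used implicitly here: the formula itself is syntactically unchanged, so the induction is really on a single $\psi$ interpreted under two different semantics.

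For the structural induction, the decisive case is negation: from $\sigma,\I\cmodels\lnot\psi$, i.e.\ $\sigma,\I\notcmodels\psi$, one must conclude $\sigma^\circ,\I^\circ\copmodels\psi$, not merely the failure of $\sigma^\circ,\I^\circ\pmodels\psi$. This forces strengthening the inductive hypothesis to a two-sided statement, proving simultaneously that $\sigma,\I\cmodels\psi \Longrightarrow \sigma^\circ,\I^\circ\pmodels\psi$ and $\sigma,\I\notcmodels\psi \Longrightarrow \sigma^\circ,\I^\circ\copmodels\psi$. The second half exploits the totality of the classical $\sigma$ and $\I$: classical satisfaction is two-valued (the classical analogue of \autoref{lem:totalsol}), which allows ``not classically satisfied'' to be pushed through the $\copmodels$ clauses. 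The cases for $\True$, $x$, and $\psi_1\wedge\psi_2$ are then routine, since $\sigma^\circ$ agrees with $\sigma$ on every synchronisation variable and conjunction lifts componentwise on both sides.

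The remaining case is the predicate atom $p(t_1,\ldots,t_n)$. Here I would first prove by a sub-induction on terms that $\mathit{Val}_{\sigma^\circ}(t)=\bot$ exactly when $\mathit{Val}_\sigma(t)=\NOFLOW$, and that otherwise the two values agree; this is direct from the $\NOFLOW$-propagation convention for $\mathit{Val}_\sigma$ and the parallel $\bot$-propagation convention for $\mathit{Val}_{\sigma^\circ}$. Because $\I^\circ$ is obtained by the very same substitution of $\bot$ for $\NOFLOW$, the entries $p(\cdots)\mapsto\true$ (respectively $\mapsto\false$) in $\I$ are in bijection with the corresponding entries in $\I^\circ$, and hence both $\pmodels$ and $\copmodels$ at atoms can be read off the corresponding classical facts. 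The main obstacle I anticipate is purely bookkeeping, namely keeping the term-level and interpretation-level $\NOFLOW$-to-$\bot$ translations aligned so that the strengthened hypothesis transfers cleanly across atoms; once that lemma is in hand, combining the structural induction with $MFA(\sigma^\circ)$ yields that $\sigma^\circ$ is a partial firing for $\psi$ and $\I^\circ$.
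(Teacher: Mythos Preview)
Your proposal is correct and follows the same approach as the paper: verify $MFA(\sigma^\circ)$ from the flow axiom, then argue by structural induction that $\sigma,\I\cmodels\psi$ implies $\sigma^\circ,\I^\circ\pmodels\psi$. Your treatment is in fact more careful than the paper's, which simply calls the induction ``straightforward'' without making explicit the two-sided strengthening needed for the negation case or the term-level sublemma for the predicate case.
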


\begin{proof}
  Assume that $\sigma,\I\cmodels\psi \land FA(\fv{\psi})$. Then  (1)
  $\sigma,\I\cmodels\psi$ and  (2) $\sigma,\I\cmodels FA(\fv{\psi})$. It can
  be seen by straightforward induction that (1) implies that
  $\sigma^\circ,\I^\circ\pmodels\psi$, because $\I^\circ$ maps the same values than $\I$ after replacing \NOFLOW by $\bot$, and $\sigma^\circ$ is defined for all free synchronisation variables. Since (2) holds for every $x\in
  \fv{\psi}$ and $\dom(\sigma^\circ) \cap \X = \dom(\sigma)  \cap \X = \fv{\psi}  \cap \X$, we can
  safely conclude that $MFA(\sigma^\circ)$: when $\sigma^\circ(x)\neq\true$ 
  then
  $\sigma^\circ(x)=\sigma(x)=\false$, implying by the flow axiom that
  $\sigma(\wh{x})=\NOFLOW$, where we conclude that $\sigma^\circ(\wh{x}) =
  \bot$ (and the meta-flow axiom holds).
\end{proof}

\begin{lemma}[Partial to Classical]
  If $\sigma$ is a partial firing for $\psi$ and for a total interpretation \I, then
  $\sigma^\dag$ is a classical firing for $\psi$ and for an interpretation
  $\I^\dag$, where $\I^\dag$ results from replacing $\bot$ by \NOFLOW in \I,
  and $\sigma^\dag$ is defined as follows:
  \[
		\begin{array}{rcl@{\qquad}l}
		\sigma^\dag(x) & = & \sigma(x), &  \mbox{if }\sigma(x)\neq\bot \\
		\sigma^\dag(x) & = & \false,  & \mbox{if }\sigma(x)=\bot \\ \\
		\sigma^\dag(\wh{x}) & = & \sigma(\wh{x}), & \mbox{if } \sigma(\wh{x})\neq\bot \\
		\sigma^\dag(\wh{x}) & = & \NOFLOW, & \mbox{if } \sigma(\wh{x})=\bot 
						\mbox{ and }\sigma(x)\neq\true 	 \\
		\sigma^\dag(\wh{x}) & = & 42, & \mbox{if } \sigma(\wh{x})=\bot 
						\mbox{ and }\sigma(x)=\true
		\end{array}
	\]
\end{lemma}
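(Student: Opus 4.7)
My plan is to mirror the structure of the preceding Classical to Partial lemma, but now proceed by mutual structural induction on $\psi$, simultaneously establishing (a) if $\sigma,\I\pmodels\psi$ then $\sigma^\dag,\I^\dag\cmodels\psi$, and (b) if $\sigma,\I\copmodels\psi$ then $\sigma^\dag,\I^\dag\not\cmodels\psi$. The mutual form is forced by negation, where partial $\pmodels\lnot\psi$ unfolds to $\copmodels\psi$ while classical $\cmodels\lnot\psi$ unfolds to $\not\cmodels\psi$; totality of $\I$ will be needed to guarantee that $\I^\dag$ can sensibly be viewed as a total classical interpretation. Once both halves of the induction go through, applying (a) to the hypothesis $\sigma,\I\pmodels\psi$ yields classical satisfaction of $\psi$, and a separate case analysis will take care of the flow axiom to conclude $\sigma^\dag,\I^\dag\cmodels\psi\wedge FA(\fv{\psi})$.

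For the base cases, $\True$ is immediate, and for $x$ one uses that $\sigma^\dag$ agrees with $\sigma$ whenever $\sigma(x)\neq\bot$, which is automatic when $\sigma,\I\pmodels x$ or $\copmodels x$. The atomic predicate case is the crux: if $\sigma,\I\pmodels p(t_1,\ldots,t_n)$ (resp.\ $\copmodels$), then $\I$ must contain a mapping for $p(\mathit{Val}_\sigma(t_1),\ldots,\mathit{Val}_\sigma(t_n))$, which forces every $\mathit{Val}_\sigma(t_i)\neq\bot$ and hence forces every data flow variable occurring in the $t_i$ to be defined under $\sigma$. On such variables $\sigma^\dag$ coincides with $\sigma$ by construction, so $\mathit{Val}_{\sigma^\dag}(t_i)=\mathit{Val}_\sigma(t_i)$, and since $\I^\dag$ preserves the entries of $\I$ on ground terms not involving $\NOFLOW$, the classical evaluation returns the same truth value. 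Conjunction follows directly from the induction hypothesis (using for (b) that partial dissatisfaction of a conjunction is dissatisfaction of either conjunct), and negation is handled by swapping (a) and (b).

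For the flow axiom on $\sigma^\dag$ I plan a case split on $\sigma(x)\in\{\true,\false,\bot\}$. If $\sigma(x)=\true$ then $\sigma^\dag(x)=\true$, and by construction $\sigma^\dag(\wh{x})$ is either $\sigma(\wh{x})\in\Data$ or the dummy value $42$, neither of which is $\NOFLOW$, so the biconditional is satisfied. If $\sigma(x)=\false$ or $\sigma(x)=\bot$ then $\sigma^\dag(x)=\false$; the meta-flow axiom on $\sigma$ excludes $\sigma(\wh{x})\neq\bot$ (it would force $\sigma(x)=\true$), so $\sigma(\wh{x})=\bot$ and therefore $\sigma^\dag(\wh{x})=\NOFLOW$, again as required. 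This exhausts all cases and establishes $\sigma^\dag,\I^\dag\cmodels FA(\fv{\psi})$.

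The main obstacle I anticipate is pinning down the behaviour of $\I^\dag$ on the new ground terms that contain $\NOFLOW$ (which live in the classical $\T$ but not in the partial one) and justifying that any reasonable extension is harmless: whenever a predicate $p(t_1,\ldots,t_n)$ is consulted along a branch where $\sigma,\I$ gave a definite answer, every $\mathit{Val}_{\sigma^\dag}(t_i)$ equals $\mathit{Val}_\sigma(t_i)$ and so lies in the original $\T$, meaning the fresh entries of $\I^\dag$ are never read. Making this locality observation precise—that the newly introduced values $42$ and $\NOFLOW$ only appear at positions on which the partial truth value did not depend—should be enough to discharge the atomic case without committing to a specific extension of $\I$.
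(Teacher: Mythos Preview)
Your proposal is correct, but it takes a more laborious route than the paper. The paper does not perform a fresh mutual induction on $\psi$; instead it observes three facts: $\sigma^\dag$ is total, $\I^\dag$ is total, and $\sigma\subseteq\sigma^\dag$. From $\sigma,\I\pmodels\psi$ and $\sigma\subseteq\sigma^\dag$ one obtains $\sigma^\dag,\I^\dag\pmodels\psi$ (monotonicity of partial satisfaction, together with the harmless passage from $\I$ to its extension $\I^\dag$), and then \autoref{lem:totalsol} is invoked to conclude that for total $\sigma^\dag$ and $\I^\dag$ partial and classical satisfaction coincide, yielding $\sigma^\dag,\I^\dag\cmodels\psi$ in one stroke. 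Your direct induction unpacks exactly what these two lemmas already encapsulate, so it works, but the paper's reuse of \autoref{lem:totalsol} is considerably shorter and avoids having to re-argue the atomic case and the locality of the new $\NOFLOW$/$42$ values. For the flow axiom both arguments are essentially identical case analyses; the paper splits on $\sigma(\wh{x})$ first rather than on $\sigma(x)$, but the three resulting cases match yours one-for-one.
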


\begin{proof}
  Assume that (1) $\sigma,\I \pmodels \psi$ and (2) $MFA(\sigma)$. Note that
  $\sigma^\dag$ is total, $\I^\dag$ is total, and $\sigma \subseteq
  \sigma^\dag$. It can be seen by \autoref{lem:totalsol} that
  $\sigma\subseteq\sigma^\dag$ and (1) imply
  $\sigma^\dag,\I^\dag\cmodels\psi$. We show that $\sigma^\dag,\I^\dag\cmodels FA(\fv{\psi})$
  by considering all possible cases. (i) If $\sigma(\wh{x}) \neq \bot$ then
  $\sigma^\dag(\wh{x}) = \sigma(\wh{x})$, and from (2) we conclude that
  $\sigma^\dag(x) = \sigma(x) = \true$. (ii) If $\sigma(\wh{x}) = \bot$ and
  $\sigma(x)=\true$, then $\sigma^\dag(\wh{x})=42$ and $\sigma^\dag(x)=\true$.
  (iii) If $\sigma(\wh{x}) = \bot$ and $\sigma(x)\neq\true$, then
  $\sigma^\dag(x)=\false$ and $\sigma^\dag(\wh{x})=\NOFLOW$.
\end{proof}

\subsection{Inexpressibility of Meta-Flow Axiom}

Unfortunately, the meta-flow axiom is not expressible in partial logic.
A consequence of this is that if partial logic is used 
as the constraint language, solutions may be found which do not
satisfy this axiom; such solutions subsequently 
need to be filtered after performing
constraint satisfaction, which clearly is not ideal, as 
the constraint engine would need to continue to find a real solution,
having wasted time finding this non-solution.

The following lemma will help prove that the meta-flow axiom is not expressible.

\begin{lemma}\label{lemma:subsetpartial}
	If $\sigma,\I\pmodels\psi$ and $\sigma\subseteq\sigma'$, then $\sigma',\I\pmodels\psi$.
\end{lemma}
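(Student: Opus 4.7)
The plan is to proceed by structural induction on $\psi$, but because the clause for $\lnot\psi$ swaps between $\pmodels$ and $\copmodels$, a single statement about $\pmodels$ will not carry through the induction. I would therefore strengthen the claim and prove simultaneously, by induction on $\psi$, both monotonicity statements: if $\sigma\subseteq\sigma'$, then (a) $\sigma,\I\pmodels\psi$ implies $\sigma',\I\pmodels\psi$, and (b) $\sigma,\I\copmodels\psi$ implies $\sigma',\I\copmodels\psi$.

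Before running the induction, I would record one preliminary fact about term evaluation: if $\sigma\subseteq\sigma'$ and $\mathit{Val}_\sigma(t)\neq\bot$, then $\mathit{Val}_{\sigma'}(t)=\mathit{Val}_\sigma(t)$. This is itself an easy induction on $t$: for $t=\wh{x}$, definedness of $\sigma(\wh{x})$ together with $\sigma\subseteq\sigma'$ forces $\sigma'(\wh{x})=\sigma(\wh{x})$; for $t=f(t_1,\ldots,t_n)$, definedness propagates to each $t_i$, so the IH applies and the evaluations agree.

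The inductive cases are then routine. The $\True$ case is immediate. For $\psi = x$, monotonicity of $\sigma$ directly transfers $\sigma(x)=\true$ (resp.\ $\false$) to $\sigma'(x)$. For $\psi_1\land\psi_2$, the $\pmodels$ side splits conjunctively and applies the IH to both conjuncts, while the $\copmodels$ side splits disjunctively and applies the IH to one. For $\lnot\psi$, clauses (a) and (b) swap and are discharged by the opposite IH. For $\psi = p(t_1,\ldots,t_n)$, the hypothesis ensures each $\mathit{Val}_\sigma(t_i)\neq\bot$ (otherwise the entry in $\I$ would be undefined and neither $\pmodels$ nor $\copmodels$ could hold), so the preliminary fact yields $\mathit{Val}_{\sigma'}(t_i) = \mathit{Val}_\sigma(t_i)$, and the same entry of $\I$ witnesses (a) or (b) for $\sigma'$.

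The main obstacle is not any single case but noticing that the negation clause forces the mutual induction: a naive attempt to prove (a) in isolation stalls at $\lnot\psi$, because the IH would give monotonicity of $\pmodels\psi$, whereas what is needed is monotonicity of $\copmodels\psi$. Once the joint statement is set up, the predicate case is the only one that requires more than mechanical unfolding, and it is handled by the term-evaluation lemma above.
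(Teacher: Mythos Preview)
Your proposal is correct and is exactly the approach the paper takes: the paper's proof reads in its entirety ``By straightforward induction on $\psi$.'' Your write-up simply unpacks what that induction involves, including the (necessary) observation that the $\lnot\psi$ case forces a simultaneous induction on $\pmodels$ and $\copmodels$, and the auxiliary monotonicity of $\mathit{Val}_\sigma$ needed for the predicate case.
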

\begin{proof} By straightforward induction on $\psi$.
\end{proof}

\begin{lemma}\label{lemma:nonpreserve} 
	No formula $\psi$ exists such that 
	$\sigma,\I\pmodels\psi$ if and only if $MFA(\sigma)$.
\end{lemma}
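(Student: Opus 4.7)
The plan is to derive a contradiction by exploiting the monotonicity of partial satisfaction given by \autoref{lemma:subsetpartial} and the fact that the meta-flow axiom is not monotone.

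First, I would observe that $MFA$ holds trivially for the empty assignment $\sigma_0 = \emptyset$: since $\sigma_0(\wh{x}) = \bot$ for every $x$, the implication $\sigma_0(\wh{x}) \neq \bot \Rightarrow \sigma_0(x) = \true$ is vacuously satisfied. So if a formula $\psi$ existed with $\sigma,\I \pmodels \psi \iff MFA(\sigma)$, we would have $\sigma_0,\I \pmodels \psi$ for any interpretation $\I$.

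Next, I would choose a data flow variable $\wh{x}$ and some datum $d \in \Data$, and form the extension $\sigma_1 = \{\wh{x} \mapsto d\}$. Plainly $\sigma_0 \subseteq \sigma_1$, so by \autoref{lemma:subsetpartial} we get $\sigma_1,\I \pmodels \psi$. But $\sigma_1(\wh{x}) = d \neq \bot$ while $\sigma_1(x) = \bot \neq \true$, so $MFA(\sigma_1)$ fails. This contradicts the assumed equivalence, giving the result.

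The key idea powering the argument is the mismatch between the monotonic behaviour of $\pmodels$ and the inherently non-monotonic nature of $MFA$: adding information about $\wh{x}$ without adding corresponding information about $x$ violates $MFA$, yet $\pmodels$ is preserved under such extensions. I do not foresee a real obstacle here; the only subtlety is making sure \autoref{lemma:subsetpartial} is strong enough when $\sigma_0$ is completely empty, which it is, since the induction treats the base cases for $x$ and $p(t_1,\ldots,t_n)$ uniformly in terms of whether the relevant values are defined in $\sigma$.
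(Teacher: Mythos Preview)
Your proposal is correct and follows essentially the same argument as the paper: both exploit the monotonicity of $\pmodels$ from \autoref{lemma:subsetpartial} against the non-monotonicity of $MFA$. The only cosmetic difference is the choice of witnesses---the paper starts from $\sigma=\{x\mapsto\false\}$ and extends to $\{x\mapsto\false,\wh{x}\mapsto 42\}$, whereas you start from $\emptyset$ and extend to $\{\wh{x}\mapsto d\}$; either pair works.
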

\begin{proof}
Assume that $\psi_{MFA}$ is such a formula over variables
$\set{x,\wh{x}}$. Then for $\sigma=\set{x\mapsto\false}$ 
we have that $\sigma,\I\pmodels\psi_{MFA}$. 
Now $\sigma\subseteq\sigma'=\set{x\mapsto\false,\widehat{x}\mapsto 42}$. 
Hence by \autoref{lemma:subsetpartial}, we have that $\sigma',\I\pmodels\psi_{MFA}$.
But $\sigma'$ does not satisfy the meta-flow axiom. Contradiction.
\end{proof}

\subsection{Simple Logic}

Using partial logic as the basis of a coordination engine  is not ideal,
as constraint satisfaction for this logic could find solutions which
do not satisfy the meta-flow axiom (due to \autoref{lemma:nonpreserve}).
Such solutions would need to be filtered in a post-processing phase,
resulting in an undesirable overhead.

We resolve this problem by 
modifying the semantics so that only certain
`minimal' solutions are found. These solutions define only the necessary
variables---which has the consequence that the constraint solver needs only
to satisfy variables mentioned in the (relevant branch of a) constraint. We
extend also the syntax of formul\ae\ by distinguishing two kinds of
conjunctions. The \emph{overlapping conjunction} ($\land$) of two constraints 
accepts two compatible solutions and joins them together, while an \emph{additive conjunction} 
($\sland$) accepts only solutions which satisfy both constraints. 
Both kinds of conjunction are present, firstly, to talk about 
the joining of solutions for (partially) independent parts
of a connector (overlapping conjunction), and to enforce overarching constraints,
such as the flow axiom (additive conjunction).
The semantics for the logic is formalised in
Definition~\autoref{def:simplesat}.
In this logic, the meta-flow axiom is expressible.

\begin{definition}[Simple satisfaction]
\label{def:simplesat}
We define inductively a simple satisfaction relation $\sigma,\I\smodels\psi$ and a simple disatisfaction relation $\sigma,\I\cosmodels\psi$, where the assignment $\sigma$ and the interpretation \I may be partial.
\[
\begin{array}{@{}r@{ }c@{ }lcl@{}}
  \emptyset,\I &\smodels& \True    & \multicolumn{2}{l}{always}\\
  \set{[x \mapsto \true]},\I&\smodels& x & \multicolumn{2}{l}{always} \\
  \sigma_1 \cup \sigma_2,\I&\smodels& \psi_1\land\psi_2
    &\ifff& \sigma_1,\I\smodels\psi_1 \mbox{ and }
            \sigma_2,\I\smodels\psi_2 \mbox{ and }
            \sigma_1 \frown \sigma_2\\
  \sigma,\I&\smodels& \psi_1\sland\psi_2
    &\ifff& \sigma,\I\smodels\psi_1 \mbox{ and }
            \sigma,\I\smodels\psi_2\\
  \sigma,\I&\smodels&\lnot\psi & \ifff & \sigma,\I\cosmodels\psi \\
  \sigma,\I&\smodels& p(t_1,\ldots,t_n) & \ifff &
        p(\mathit{Val}_{\sigma}(t_1), \ldots,\mathit{Val}_{\sigma}(t_n))\mapsto\true\in\I
        \mbox{ and }
        dom(\sigma) = \mathit{fv}(p(t_1\ldots,t_n)) \\
\\
  \set{[x\mapsto \false]},\I&\cosmodels& x & \multicolumn{2}{l}{always}\\ 
  \sigma,\I&\cosmodels& \psi_1\land \psi_2 & \ifff & 
	  \mbox{for all }\sigma_1,\sigma_2 \mbox{ s.t. } 
	\sigma_1\frown\sigma_2 \mbox{ and }\sigma=\sigma_1\cup\sigma_2
		\\&&&&
		\mbox{we have }
      \sigma_1,\I\cosmodels \psi_1 \mbox{ or } \sigma_2,\I\cosmodels \psi_2 \\
  \sigma,\I&\cosmodels& \psi_1\sland \psi_2 & \ifff & 
     \sigma,\I\cosmodels \psi_1 \mbox{ or } \sigma,\I\cosmodels \psi_2 \\
  \sigma,\I&\cosmodels& \lnot \psi & \ifff & \sigma,\I\smodels \psi \\
  \sigma,\I&\cosmodels& p(t_1,\ldots,t_n) & \ifff &
     p(\mathit{Val}_{\sigma}(t_1), \ldots, \mathit{Val}_{\sigma}(t_n))\mapsto\false\in\I
     \mbox{ and }
     dom(\sigma) = \mathit{fv}(p(t_1\ldots,t_n)) 
\end{array}
\]
\[
\mbox{where~~} \begin{array}{rcl}
  \sigma_1 \frown \sigma_2 &\defn&
  \forall x \in dom(\sigma_1) \cap dom(\sigma_2). \sigma_1(x) = \sigma_2(x) \\
\end{array}\]
\end{definition}

The additive conjunction $\sland$ of $\psi_1$ and $\psi_2$ is satisfied by
$\sigma$ if $\sigma$ satisfies both $\psi_1$ and $\psi_2$. The overlapping
conjunction $\land$ is more relaxed, and simply merges any pair of solutions
for $\psi_1$ and $\psi_2$ that do not contradict each other. For the
constraints of the primitives, the conjunctions that appear in a positive
position are regarded as overlapping conjunctions ($\land$), while the
conjunctions that appear in a negative position are regarded as
additive conjunctions ($\sland$).%
\footnote{A positive position is inside the scope of an even number of negations, and a negative position is inside the scope of an odd number of negations. For example, in $(\lnot (a \land \lnot b)) \land c$, $a$ is in a negative position, while $b$ and $c$ are in a positive position.}
As a consequence, the rule for $\sigma,\I\smodels
\psi_1\sland\psi_2$ is only used when applying the flow axiom, as we will 
soon see, and the rule
for $\sigma,\I\cosmodels \psi_1\land\psi_2$ is present mainly for the
completeness of the definition.

\paragraph{Notation}
In the following we write $\psi^S$ to 
represent the constraints 
obtained by replacing all conjunctions in $\psi$ in negative positions by
$\sland$, and we write $\psi^P$ to represent the constraints obtained by replacing 
all occurrences of $\sland$ in $\psi$  by $\land$. We also encode $\psi_1 \slor \psi_2$ as $\lnot ( \lnot \psi_1 \sland \lnot \psi_2)$.

\paragraph{~}
When specifying constraints in simple logic, 
we never use $-\lor-$ in a positive position, which would correspond to
$\neg(\neg-\land\neg-)$, as this means satisfying the clause 
$\sigma,\I \cosmodels \psi_1 \land \psi_2$ in order
to find a given assignment.  Therefore we do not 
require the use of universal quantification over solution sets. 
In the partial satisfaction relation, we define how to verify that a given
pair $\sigma,\I$ satisfies a constraint. The simple satisfaction relation aims
at \emph{constructing} $\sigma$ such that the pair $\sigma,\I$ satisfies the
constraints. Assuming the universal quantifier is never used, we believe that
the simple satisfaction relation describes a constructive process to obtain a
solution that is not more complex than searching for a solution in partial 
logic.
Note that we still lack experimental verification of this intuition.

The following axiom is the syntactic counterpart of the meta-flow axiom,
modified slightly to be laxer about what it considers to be a solution
(namely allowing data flow variable to be satisfied, without requiring 
that the corresponding synchronisation variable are defined).

\begin{definition}[Simple Flow Axiom]
	\begin{equation}
	SFA(x) ~~\defn~~\true ~\slor~ x ~\slor~ \neg x ~\slor~ (x \wedge
	\wh{x}=\wh{x}) ~\slor~ \wh{x}=\wh{x}
	\tag{simple flow axiom}
	\end{equation}
\end{definition}

We write $SFA(X)$ for the conjunction $\bigwedge_{x\in X}SFA(x)$. We also write $SFA(\psi)$ as a shorthand for $SFA(\fv{\psi})$.

\begin{lemma}\label{lemma:SFAexpressesMFA}
	$\sigma,\I\smodels SFA(x)$ if and only if $\sigma^P$ satisfies the meta-flow axiom, where $\sigma^P$ extends $\sigma$ as follows:
	\[\sigma^P = \sigma \cup \set{x\mapsto \True ~|~ \wh{x} \in \dom(\sigma)}\]
\end{lemma}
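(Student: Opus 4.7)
The plan is a straightforward enumeration, since $SFA(x)$ only mentions two variables and is a disjunction of five small formulae. First I would derive the shortcut rule $\sigma,\I\smodels\psi_1\slor\psi_2$ iff $\sigma,\I\smodels\psi_1$ or $\sigma,\I\smodels\psi_2$ by unfolding the encoding $\slor\defn\neg(\neg-\sland\neg-)$: applying $\smodels\neg\psi\iff\cosmodels\psi$ twice and the $\cosmodels$-rule for $\sland$ collapses the encoded form to a plain semantic disjunction. Iterating this four times turns $\sigma,\I\smodels SFA(x)$ into: $\sigma,\I$ simple-satisfies one of $\True$, $x$, $\neg x$, $x\wedge\wh{x}=\wh{x}$, or $\wh{x}=\wh{x}$.

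Next I would read off the (essentially unique) simple satisfier of each disjunct directly from the rules of Definition~\ref{def:simplesat}: $\True$ forces $\sigma=\emptyset$; $x$ forces $\sigma=\set{x\mapsto\True}$; $\neg x$ forces $\sigma=\set{x\mapsto\False}$ (via $\cosmodels$); the equality rule, which insists that $\dom(\sigma)=\fv{\wh{x}=\wh{x}}=\set{\wh{x}}$ and that $\mathit{Val}_\sigma(\wh{x})\neq\bot$, forces $\sigma=\set{\wh{x}\mapsto d}$ for some $d\in\Data$; and the overlapping-conjunction rule for $x\wedge\wh{x}=\wh{x}$ glues the $x$-case and the equality-case together (their domains are disjoint, so compatibility is automatic) into $\sigma=\set{x\mapsto\True,\wh{x}\mapsto d}$. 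Applying the construction $\sigma^P$ to each of these five shapes yields $\emptyset$, $\set{x\mapsto\True}$, $\set{x\mapsto\False}$, $\set{x\mapsto\True,\wh{x}\mapsto d}$, and $\set{x\mapsto\True,\wh{x}\mapsto d}$ respectively. In every case, whenever $\sigma^P(\wh{x})\neq\bot$ we have $\sigma^P(x)=\True$, so $\sigma^P$ obeys the meta-flow axiom.

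For the converse, I would enumerate the six combinations of $\sigma(x)\in\set{\bot,\True,\False}$ with $\sigma(\wh{x})\in\set{\bot,d}$. Five of them coincide with the shapes above, and in each of those the $SFA(x)$ satisfier exhibited earlier does the job. The remaining combination, $\sigma(x)=\False$ together with $\sigma(\wh{x})=d$, is the delicate case and constitutes the main subtlety: the construction $\sigma^P=\sigma\cup\set{x\mapsto\True\mid\wh{x}\in\dom(\sigma)}$ would attempt to bind $x$ to both $\False$ and $\True$, so $\sigma^P$ is not even a well-defined partial map and hence cannot satisfy the meta-flow axiom. This case is exactly the one the meta-flow axiom intuitively forbids (genuine data flow with synchronisation off), and it is also the unique combination not produced by any disjunct of $SFA(x)$---so both sides of the biconditional reject it, closing the argument.
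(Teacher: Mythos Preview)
Your approach is essentially the same as the paper's: both proofs enumerate the five disjuncts of $SFA(x)$, read off the unique simple satisfier of each from the rules of Definition~\ref{def:simplesat}, and compare the resulting list against the table of meta-flow-axiom-compatible shapes. The paper's proof is terser---it simply lists the five satisfiers and asserts that their $\sigma^P$-extensions are ``precisely the solutions to the meta-flow axiom''---whereas you additionally spell out the $\slor$ shortcut, treat the converse by a six-case enumeration, and flag the $\{x\mapsto\false,\wh{x}\mapsto d\}$ case where $\sigma^P$ fails to be a function; these are welcome clarifications but not a different argument.
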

\begin{proof}
	We have $\emptyset,\I\smodels\true$;~
	$\set{x\mapsto\true},\I\smodels x$;~ 
	$\set{x\mapsto\false},\I\smodels \neg x$;
	$\set{x\mapsto\true,\wh{x}\mapsto t},\I\smodels x \wedge \wh{x}=\wh{x}$; and
	$\set{\wh{x}\mapsto t},\I\smodels \wh{x}=\wh{x}$, for
	an arbitrary ground term $t$, and no other $\sigma$.
	Extending $\sigma$ to $\sigma^P$ we obtain precisely the solutions to the  
	meta-flow axiom.
\end{proof}

Observe that simple logic clearly does not preserve classical or even partial 
equivalences, as $\true \slor x \slor$ $\lnot x \slor (x \wedge \wh{x}=\wh{x}) 
\slor \wh{x}=\wh{x}\equiv_C\true$,
classically, but this is not the case in simple logic.

\begin{definition}[Simple Firing] \label{def:simplefiring}
	~\\An assignment $\sigma$ is called a \emph{simple firing} whenever  
	$\sigma,\I\smodels \psi\sland SFA(\psi)$.
\end{definition}

Note that the simple flow axiom differs from the meta-flow axiom
because it also accepts solutions where $\wh{x}$ is defined, but $x$ is not.
This is because the simple flow axiom is designed to filter from a set of
minimal solutions (i.e., solutions in the simple logic), while the meta-flow
axiom is designed to filter good solutions from all  solutions the constraint
engine finds, namely, the ones that include additional assignments to make the flow axiom hold. 
As a consequence, the assignment $\set{\wh{a} \mapsto d}$ is a simple firing for
the formula $\wh{a} = d$, and the assignment $\set{\wh{a} \mapsto d,a\mapsto
\True}$ is a partial firing for the same formula, but not the other way
around.

With simple logic we can check a formula to ensure that all of its solutions satisfy
the meta-flow axiom. This means that we do not need to filter solutions to such a
constraint. Furthermore, as the simple flow axiom is preserved through composition
($\sland$), we are guaranteed to have simple firings without having to perform
a post hoc filter phase.

Note that implication in simple logic does not have the exact same meaning as
in the other logics. $c\to a$, whenever in a positive position, is regarded in
simple logic as $\lnot(c\sland \lnot a)$, which has only two firings:
$\set{c\mapsto\false}$ and $\set{a\mapsto\true}$. The union of these two firings is not a simple firing because it is not ``minimal enough". That is, the resulting union is not satisfied by the simple satisfaction relation because it contains too many elements.
Recall the constraints of the \filterp in \autoref{tab:statelessex}, and let $d$ be such that $p(d)$ does not hold. The assignment
$\set{a\mapsto\true,c\mapsto\false,\wh{a}\mapsto d}$ is both a partial and a simple firing. However,
$\set{z\mapsto\true,a\mapsto\true,c\mapsto\false,\wh{a}\mapsto d}$
is also a partial firing but not a simple firing, since 
$z \notin \fv{c\to a}$, therefore the firing is not mininal enough.

\begin{lemma} \label{lem:scomposition}
Let $\psi_1$ and $\psi_2$ be constraints defined for the simple logic. Then
\[
 (\psi_1\sland SFA(\psi_1)) ~\land~  (\psi_2\sland SFA(\psi_2))
 ~~~\equiv~~~
 (\psi_1\land\psi_2)  ~\sland~  (SFA(\psi_1 \land \psi_2))
 \]
The equivalence between the left and the right hand formul\ae\ mean that they
have the same solutions according to the simple satisfaction.
\end{lemma}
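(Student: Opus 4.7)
My plan is to reduce the equivalence to a per-variable statement about the meta-flow axiom, with Lemma~\ref{lemma:SFAexpressesMFA} as the pivot. Throughout I interpret ``equivalent'' as ``having the same simple satisfiers''.

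First I would unfold the outer connectives on both sides. Applying the clauses for $\sland$ and $\land$, an assignment $\sigma$ satisfies the LHS exactly when there exist $\sigma_1,\sigma_2$ with $\sigma=\sigma_1\cup\sigma_2$, $\sigma_1\frown\sigma_2$, and, for each $i\in\{1,2\}$, $\sigma_i\smodels\psi_i$ together with $\sigma_i\smodels SFA(\psi_i)$. On the RHS, $\sigma\smodels(\psi_1\land\psi_2)\sland SFA(\psi_1\land\psi_2)$ unfolds to $\sigma\smodels\psi_1\land\psi_2$---which by the $\land$-rule gives exactly the same decomposition condition on the $\psi_i$'s as on the LHS---together with $\sigma\smodels SFA(\psi_1\land\psi_2)$. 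The $\psi$-parts therefore coincide, and the proof reduces to showing that, whenever $\sigma=\sigma_1\cup\sigma_2$ with $\sigma_1\frown\sigma_2$ and $\sigma_i\smodels\psi_i$, one has $\sigma_1\smodels SFA(\psi_1)$ and $\sigma_2\smodels SFA(\psi_2)$ iff $\sigma\smodels SFA(\psi_1\land\psi_2)$.

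Next I would lift Lemma~\ref{lemma:SFAexpressesMFA} from one variable to an arbitrary set $X$ of synchronisation variables, by induction on $|X|$ using the $\land$-rule together with the fact that each $SFA(x)$ only constrains $\{x,\wh{x}\}$. The lifted statement reads: $\sigma\smodels SFA(X)$ iff the extension $\sigma^P$ (setting $x\mapsto\True$ whenever $\wh{x}\in\dom(\sigma)$ and $x$ is unassigned) obeys the meta-flow axiom at every $x\in X$. Applying this converts the reduced equivalence into per-variable MFA checks on the appropriate $P$-extensions, ranging over $\fv{\psi_1}$, $\fv{\psi_2}$, and $\fv{\psi_1}\cup\fv{\psi_2}$ respectively.

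Finally I would conclude by a case analysis on each sync variable $x$ in the union, comparing the $\{x,\wh{x}\}$-slices of $\sigma_1^P$, $\sigma_2^P$, and $\sigma^P$. The routine cases are when $x$ is in only one $\fv{\psi_i}$ (the $\sigma^P$-slice agrees with that of $\sigma_i^P$, so MFA transfers) and when both $\sigma_1(x)$ and $\sigma_2(x)$ are defined (the compatibility $\sigma_1\frown\sigma_2$ forces agreement). \textbf{The main obstacle} is the asymmetric case where one $\sigma_i$ defines $\wh{x}$ but not $x$, while the other defines $x$ but not $\wh{x}$: plain compatibility $\sigma_1\frown\sigma_2$ is silent here, so I would instead use that $\sigma_i\smodels SFA(\psi_i)$ already restricts the slice of $\sigma_i$ to one of the five shapes listed in the proof of Lemma~\ref{lemma:SFAexpressesMFA}, and then verify that these shapes compose under $\cup$ into a slice for $\sigma$ whose own $P$-extension again matches one of the five shapes. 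This compositional check of the $SFA$ disjuncts, and not any deeper argument about the shape of $\psi_1$ and $\psi_2$, is where the lemma really lives.
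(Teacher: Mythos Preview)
Your route differs from the paper's: the paper never invokes Lemma~\ref{lemma:SFAexpressesMFA} or the meta-flow axiom, but argues purely in terms of solution sets, writing $S_i=\mathsf{sols}(\psi_i)$, $S_{Fi}=\mathsf{sols}(SFA(\psi_i))$ and pushing the ``compatible merge'' construction $\{\sigma_1\cup\sigma_2:\sigma_1\frown\sigma_2,\ldots\}$ through a short chain of set equalities until it reaches $\mathsf{sols}(\psi_1\land\psi_2)\cap\mathsf{sols}(SFA(\psi_1\land\psi_2))$. No per-variable case split and no $\sigma^P$ appear.

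More importantly, your resolution of the ``main obstacle'' does not go through. Among the five admissible $\{x,\wh{x}\}$-slices for $SFA(x)$ are $F=\{x\mapsto\false\}$ and $D=\{\wh{x}\mapsto t\}$. Their domains are disjoint, so $F\frown D$ holds trivially, yet $F\cup D=\{x\mapsto\false,\wh{x}\mapsto t\}$ is none of the five shapes; and since $x$ is already assigned in $F\cup D$, the $P$-extension leaves it unchanged, so $(F\cup D)^P$ still violates the meta-flow axiom. Your ``compositional check of the $SFA$ disjuncts'' therefore fails precisely in the case you singled out as the crux. The paper's set-theoretic argument sidesteps this by never decomposing $SFA$ per variable---though one may observe that its own final identification of $SFA(\psi_1)\land SFA(\psi_2)$ with $SFA(\psi_1\land\psi_2)$ is sensitive to the very same $F\cup D$ phenomenon when $\psi_1$ and $\psi_2$ share a synchronisation variable.
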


\begin{proof}
Let $\mathsf{sols}(\psi)$ denote the set of solutions of $\psi$ according to
the simple satisfaction relation, and let  $\mathsf{sols}(\psi_1) = S_1$,
$\mathsf{sols}(\psi_2) = S_2$, $\mathsf{sols}(SFA(\psi_1)) = S_{F1}$, and
$\mathsf{sols}(SFA(\psi_2)) = S_{F2}$. The proof follows from the expansion of
the definition of simple satisfaction.
\[\begin{array}{rcl}
  \multicolumn{3}{l}{
  \mathsf{sols}((\psi_1\sland SFA(\psi_1)) ~\land~
                (\psi_2\sland SFA(\psi_2)))}\\
   & = &
    \set{\sigma_1 \cup \sigma_2 ~|~
      \sigma_1 \in S_1 \cap S_{F1}, \sigma_2 \in S_2 \cap S_{F2},
      \sigma_1 \frown \sigma_2}\\
   & = &
    \set{\sigma_1 \cup \sigma_2 ~|~
      \sigma_1 \in S_1, \sigma_1 \in S_{F1}, \sigma_2 \in S_2,
      \sigma_2 \in S_{F2}, \sigma_1 \frown \sigma_2}\\
   & = &
    \set{\sigma_1 \cup \sigma_2 ~|~
      \sigma_1 \in S_1, \sigma_2 \in S_2, \sigma_1 \frown \sigma_2}
   ~\cap~
    \set{\sigma_1 \cup \sigma_2 ~|~
      \sigma_1 \in S_{F1}, \sigma_2 \in S_{F2}, \sigma_1 \frown \sigma_2}\\
  & = &  
    \mathsf{sols}(\psi_1\land\psi_2) ~\cap~
    \mathsf{sols}(SFA(\psi_1) \land SFA(\psi_2))\\
  & = &  
    \mathsf{sols}((\psi_1\land\psi_2) ~\sland~ 
                  (SFA(\psi_1) \land SFA(\psi_2)))\\
  & = &  
    \mathsf{sols}((\psi_1\land\psi_2) ~\sland~ (SFA(\psi_1 \land \psi_2)))
\end{array}\]
\end{proof}

\begin{lemma}[Partial to Simple]
	\ \begin{itemize}
		\item 
	If $\sigma,\I\pmodels\psi$ and $MFA(\sigma)$, then there exists
	$\sigma^\ddag$ such that $\sigma^\ddag\subseteq\sigma$ and 
	$\sigma^\ddag,\I\smodels\psi^S \sland SFA(\psi)$.
	\item
	If $\sigma,\I\copmodels\psi$ and $MFA(\sigma)$, then there exists
	$\sigma^\ddag$ such that $\sigma^\ddag\subseteq\sigma$ and 
	$\sigma^\ddag,\I\cosmodels\psi^S \sland SFA(\psi)$.
\end{itemize}
\end{lemma}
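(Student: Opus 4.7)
The plan is to prove both items simultaneously by structural induction on $\psi$, playing them off against each other in the $\neg$ case, and to attach the $SFA(\psi)$ conjunct at the end using \autoref{lemma:SFAexpressesMFA} together with the hypothesis $MFA(\sigma)$.

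To align the $\neg$ case, I would introduce an auxiliary dual translation $\psi^{S'}$ that replaces the conjunctions at \emph{positive} positions of $\psi$ by $\sland$; unfolding the definition of polarity gives $(\neg\psi')^S = \neg(\psi')^{S'}$ and $(\neg\psi')^{S'} = \neg(\psi')^S$. The induction is therefore strengthened to four parallel claims, indexed by the two relations ($\pmodels$ vs $\copmodels$) and the two translations ($\psi^S$ vs $\psi^{S'}$); the $\neg$ case of each claim invokes the IH of the appropriate dual, because the polarity swap under $\neg$ exactly interchanges $\psi^S$ and $\psi^{S'}$.

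The base cases ($\True$, $x$, $p(t_1,\ldots,t_n)$) are direct: take $\sigma^\ddag$ to be $\emptyset$, a singleton such as $\set{x\mapsto\true}$, or the restriction $\sigma|_{\fv{p(t_1,\ldots,t_n)}}$, so that the domain-matching condition built into the simple (dis)satisfaction clauses is met. For the $\land$ case at positive polarity, the translation keeps $\land$ and the IH supplies $\sigma_1^\ddag,\sigma_2^\ddag\subseteq\sigma$ which are automatically compatible (both being sub-assignments of one $\sigma$), so their union witnesses $\smodels$. For the $\land$ case at negative polarity, the translation replaces $\land$ by $\sland$, and the disjunctive shape of the $\cosmodels$ rule for $\sland$ means a single dissatisfaction witness from the IH suffices.

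Finally, to upgrade $\smodels\psi^S$ to $\smodels\psi^S\sland SFA(\psi)$, I would show that any $\sigma^\ddag\subseteq\sigma$ with $MFA(\sigma)$ automatically simple-satisfies $SFA(\fv{\psi})$. Indeed, $MFA(\sigma)$ permits only the five patterns $(\bot,\bot)$, $(\true,\bot)$, $(\false,\bot)$, $(\true,t)$, and $(\bot,t)$ on any pair $(x,\wh{x})$ of $\sigma^\ddag$, and these are precisely the five simple-solutions of $SFA(x)$ enumerated in the proof of \autoref{lemma:SFAexpressesMFA}. Hence $\sigma^\ddag$ decomposes as a compatible union $\bigcup_{x}\sigma_x$ with each $\sigma_x\smodels SFA(x)$, giving $\sigma^\ddag\smodels SFA(\fv{\psi})$ via the overlapping-conjunction clause, which combines with the main claim to produce $\sigma^\ddag\smodels\psi^S\sland SFA(\psi)$. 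The main obstacle I anticipate is the $\cosmodels$ case for $\land$ at positive polarity: the universal-quantifier clause of $\cosmodels$ for $\land$ is not obviously discharged by a single dissatisfaction witness from the IH, and care is needed to choose $\sigma^\ddag$ so that every splitting $\sigma^\ddag=\tau_1\cup\tau_2$ with $\tau_1\frown\tau_2$ dissatisfies at least one of the two conjuncts.
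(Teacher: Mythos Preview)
Your plan is the paper's own approach --- structural induction on $\psi$ with the two bullets proved in tandem --- made more precise by the explicit dual translation $\psi^{S'}$. The paper leaves this polarity bookkeeping implicit: its $\wedge$/$\copmodels$ case simply asserts ``$\land$ is in a negative position, therefore $(\psi_1\land\psi_2)^S=\psi_1\sland\psi_2$'', and its $\neg$ case just says ``by induction hypothesis''. Your separate treatment of the $SFA$ conjunct at the end is also fine; the paper instead threads it through the induction via \autoref{lem:scomposition}, which amounts to the same thing.

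Your anticipated obstacle does not bite for the first bullet. Tracing the dependencies among your four claims, $(A)\colon\pmodels\psi\Rightarrow{\smodels}\,\psi^S$ only ever calls, through $\neg$, the dual $(B')\colon\copmodels\psi\Rightarrow{\cosmodels}\,\psi^{S'}$; and in $(B')$ the top-level $\land$ is turned into $\sland$ by $S'$, whose $\cosmodels$-rule is disjunctive, so a single IH witness suffices. The universal-quantifier $\cosmodels$-rule for overlapping $\land$ is therefore never reached in the $(A)\leftrightarrow(B')$ cycle --- this is exactly what the paper's phrase ``in a negative position'' is gesturing at. For the second bullet as literally stated (with $\psi^S$ rather than $\psi^{S'}$), your worry is warranted, and the paper's argument does not discharge it either: at the top level of bullet~2 the outermost $\land$ sits at a \emph{positive} position, so $\psi^S$ keeps it, and already $\psi=x\land y$ with $\sigma=\{x\mapsto\false\}$ admits no $\sigma^\ddag\subseteq\sigma$ with $\sigma^\ddag,\I\cosmodels x\land y$. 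The second bullet should be read with $\psi^{S'}$, i.e.\ as the auxiliary claim that both your four-claim setup and the paper's informal argument actually use.
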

\begin{proof}
	Proof is by straightforward induction on $\psi$. Note that $\sland$ cannot 
	occur in $\psi$, and in each step $\sigma^\ddag$ is guaranteed to exist and 
	to obey the simple flow axiom:
	\begin{description}
		\item[Case $\true$] --- $\sigma^\ddag=\emptyset$.
		\item[Case $x$] --- $\sigma^\ddag=\set{x\mapsto\sigma(x)}$.
		\item[Case $\psi_1\wedge\psi_2$] --- For the $\pmodels$ case: 
			Assume that $\sigma,\I\pmodels \psi_1\wedge\psi_2$ and $MFA(\sigma)$.
			Therefore $\sigma,\I\pmodels \psi_1$ and $\sigma,\I\pmodels \psi_2$. By
			the induction hypothesis, we have $\sigma^\ddag_1\subseteq\sigma_1$ and
			$\sigma^\ddag_2\subseteq\sigma_2$ such that $\sigma^\ddag_1,\I\smodels
			\psi_1^S \sland SFA(\psi_1)$ and $\sigma^\ddag_2,\I\smodels
			\psi_2^S \sland SFA(\psi_2)$. Clearly we have
			$\sigma^\ddag_1\frown\sigma^\ddag_2$ 
			and
			$\sigma^\ddag_1\cup\sigma^\ddag_2\subseteq\sigma$,
			and by Lemma~\ref{lem:scomposition} we conclude that
			$\sigma^\ddag_1\cup\sigma^\ddag_2,\I\smodels (\psi_1\wedge\psi_2)^S
			\sland SFA(\psi_1\land\psi_2)$.
		
  		For the $\copmodels$ case:
			Assume that $\sigma,\I\copmodels \psi_1\wedge\psi_2$ and $MFA(\sigma)$.
			Therefore $\sigma,\I\pmodels \psi_1$ or $\sigma,\I\pmodels \psi_2$. By
			the induction hypothesis, we have $\sigma^\ddag_1\subseteq\sigma_1$ and
			$\sigma^\ddag_2\subseteq\sigma_2$ such that $\sigma^\ddag_1,\I\smodels
			\psi_1^S \sland SFA(\psi_1)$ and $\sigma^\ddag_2,\I\smodels \psi_2^S \sland SFA(\psi_2)$. Note that $\land$ is in a negative position,
			therefore $(\psi_1 \land \psi_2)^S = \psi_1 \sland \psi_2$. Clearly we
			have that when $\sigma^{\ddag}=\sigma^{\ddag}_1$ or
			$\sigma^{\ddag}=\sigma^{\ddag}_2$, $\sigma^{\ddag},\I\cosmodels
			(\psi_1\wedge\psi_2)^S \sland SFA(\psi_1\land\psi_2)$ and
			$\sigma^\ddag\subseteq\sigma$.

					\item[Case $\neg\psi$]	--- by induction hypothesis.	
		\item[Case $p(t_1,\ldots,t_n)$]	--- in both cases 
			$\sigma^\ddag=\set{ v\mapsto\sigma(v) ~|~ v\in \fv{p(t_1,\ldots,t_n)}}$
	\end{description}
\end{proof}

\begin{lemma}\label{lemma:temporary}
	If $\sigma,\I\smodels\psi$, then $\sigma^P,\I\pmodels\psi^P$.
	If $\sigma,\I\cosmodels\psi$, then $\sigma^P,\I\copmodels\psi^P$.
\end{lemma}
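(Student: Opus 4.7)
The plan is to prove both claims simultaneously by structural induction on $\psi$, using mutual induction to handle the negation switch between $\smodels$ and $\cosmodels$. The key tools are the monotonicity of partial satisfaction from \autoref{lemma:subsetpartial}, together with the syntactic identity $(\psi_1 \land \psi_2)^P = \psi_1^P \land \psi_2^P = (\psi_1 \sland \psi_2)^P$, which collapses both forms of simple conjunction onto the partial logic's single conjunction.

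The base cases are immediate. For $\True$, $x$, and $\lnot x$, the simple relation forces $\sigma$ to be the singleton determined by the variable and its required value, and this exact $\sigma$ (extended to $\sigma^P$) satisfies or dissatisfies the atom in the partial logic. For $p(t_1,\ldots,t_n)$, the simple clause demands $\dom(\sigma) = \fv{p(t_1,\ldots,t_n)}$, so every data flow variable occurring in the term is already bound; since $\sigma^P$ only augments $\sigma$ with assignments to synchronisation variables, which cannot appear inside terms, we have $\mathit{Val}_{\sigma^P}(t_i) = \mathit{Val}_{\sigma}(t_i)$ for every $i$, and the predicate clauses transfer directly.

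For the inductive $\land$ case under $\smodels$, simple satisfaction yields a decomposition $\sigma = \sigma_1 \cup \sigma_2$ with $\sigma_i,\I \smodels \psi_i$ and $\sigma_1 \frown \sigma_2$. The induction hypothesis gives $\sigma_i^P,\I \pmodels \psi_i^P$; observing that $\sigma_i^P \subseteq (\sigma_1 \cup \sigma_2)^P = \sigma^P$, I promote both via \autoref{lemma:subsetpartial} to $\sigma^P,\I \pmodels \psi_i^P$, whence $\sigma^P,\I \pmodels (\psi_1 \land \psi_2)^P$. The $\sland$ case is direct because both conjuncts are already satisfied by the same $\sigma$, and the $\lnot$ case is an immediate application of the dual induction hypothesis.

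The main obstacle will be the $\cosmodels$ clause for $\land$, whose definition universally quantifies over all compatible decompositions of $\sigma$. To extract a useful witness of partial dissatisfaction (where only one conjunct need fail), I will instantiate the universal quantifier at the trivial split $\sigma_1 = \sigma_2 = \sigma$, which is trivially compatible with itself and hence forces $\sigma,\I \cosmodels \psi_i$ for some $i \in \set{1,2}$; the induction hypothesis then gives $\sigma^P,\I \copmodels \psi_i^P$, which suffices for partial dissatisfaction of the conjunction. A side check is required to see that $\sigma^P$ is well-defined, i.e.\ that the added $x \mapsto \true$ bindings do not clash with explicit $\false$ values obtained elsewhere in $\sigma$; this can be argued by inspection of the forms of $\sigma$ produced by each simple-logic clause, noting that such conflicts would also violate the compatibility condition $\sigma_1 \frown \sigma_2$ used in the $\land$ case.
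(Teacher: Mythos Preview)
Your approach matches the paper's exactly: the paper's proof is the single line ``By straightforward induction on $\psi$,'' and you have spelled out that induction in detail. The inductive cases you give are sound, including the nice trick of instantiating the universal in the $\cosmodels$ clause for $\land$ at the trivial split $\sigma_1=\sigma_2=\sigma$.

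There is one point where your side remark goes wrong. Your justification that $\sigma^P$ is always a well-defined function---``such conflicts would also violate the compatibility condition $\sigma_1\frown\sigma_2$''---does not hold. Compatibility only requires agreement on \emph{shared} variables, and $x$ and $\wh{x}$ are distinct variables. Concretely, take $\psi=\lnot x\land(\wh{x}=\wh{x})$: the simple rules give $\sigma_1=\{x\mapsto\false\}$ and $\sigma_2=\{\wh{x}\mapsto d\}$, which are trivially compatible, yet $\sigma=\sigma_1\cup\sigma_2$ has $\sigma(x)=\false$ together with $\wh{x}\in\dom(\sigma)$, so the literal union defining $\sigma^P$ is not a function. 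The paper's one-line proof glosses over this too; in practice the lemma is only invoked (in the Simple-to-Partial lemma) on simple firings, where $SFA$ guarantees that $\sigma^P$ is well defined, and under that implicit hypothesis your induction---including the containment $\sigma_i^P\subseteq\sigma^P$---goes through cleanly.
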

\begin{proof}
By straightforward induction on $\psi$.
\end{proof}

\begin{lemma}[Simple to Partial]
	If $\sigma$ is a simple firing for $\psi$, then $\sigma^P$ is a partial firing for $\psi^P$.
	Furthermore, for all $\sigma'$ such that $\sigma^P\subseteq\sigma'$ 
	and $\sigma'$ satisfies the meta-flow axiom, $\sigma'$ is a partial firing for $\psi^P$.
\end{lemma}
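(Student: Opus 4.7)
The plan is to chain together three lemmas already proved earlier: \autoref{lemma:temporary} to transfer simple satisfaction into partial satisfaction for $\psi^P$, \autoref{lemma:SFAexpressesMFA} to establish the meta-flow axiom for $\sigma^P$, and finally \autoref{lemma:subsetpartial} to handle the monotonicity claim in the ``furthermore'' clause. So the work is largely bookkeeping once the right decompositions are lined up.

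First I would unfold the assumption that $\sigma$ is a simple firing, which by \autoref{def:simplefiring} means $\sigma,\I\smodels \psi \sland SFA(\psi)$. Applying \autoref{lemma:temporary}, and noting that $(\psi \sland SFA(\psi))^P = \psi^P \land SFA(\psi)^P$ since the $P$-translation replaces every $\sland$ by $\land$, I obtain $\sigma^P,\I\pmodels \psi^P \land SFA(\psi)^P$. The partial satisfaction clause for $\land$ then immediately gives $\sigma^P,\I\pmodels \psi^P$, which is half of what is required.

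Next I would establish $MFA(\sigma^P)$. From the simple satisfaction of $\psi \sland SFA(\psi)$ (using the additive conjunction clause) I extract $\sigma,\I\smodels SFA(\psi) = \bigwedge_{x\in\fv{\psi}}SFA(x)$. Unfolding the (overlapping) $\land$-clause of simple satisfaction inductively over $\fv{\psi}$, I get a decomposition $\sigma = \bigcup_{x\in\fv{\psi}} \sigma_x$ with $\sigma_x,\I\smodels SFA(x)$ for each $x$. By \autoref{lemma:SFAexpressesMFA}, each $\sigma_x^P$ satisfies the meta-flow axiom for the pair $(x,\wh{x})$. Since the $\sigma_x$'s are pairwise compatible and their $P$-extension commutes with union, $\sigma^P = \bigcup_x \sigma_x^P$, so the meta-flow axiom holds for $\sigma^P$ variable-wise, hence globally. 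This shows $\sigma^P$ is a partial firing for $\psi^P$.

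For the ``furthermore'' clause, suppose $\sigma^P \subseteq \sigma'$ and $\sigma'$ satisfies the meta-flow axiom. Applying \autoref{lemma:subsetpartial} to $\sigma^P,\I\pmodels \psi^P$ yields $\sigma',\I\pmodels \psi^P$, and $MFA(\sigma')$ is given by assumption; thus $\sigma'$ is a partial firing for $\psi^P$. The only step requiring a little care is the second one, verifying that the $P$-extension interacts well with the simple-logic decomposition of $\bigwedge_x SFA(x)$ across overlapping conjunctions; everything else is a direct appeal to prior lemmas.
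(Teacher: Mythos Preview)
Your proof is correct and follows the same line as the paper, which simply writes ``Follows from Lemmas~\ref{lemma:subsetpartial} and~\ref{lemma:temporary}.'' You have unpacked the argument more carefully than the paper does: in particular, you make explicit the appeal to \autoref{lemma:SFAexpressesMFA} (together with the decomposition of $SFA(\psi)$ across the overlapping conjunction) to obtain $MFA(\sigma^P)$, a step the paper leaves implicit. This is not a different route, just a more detailed elaboration of the same one; your extra care about the interaction of the $P$-extension with the $\bigwedge_x SFA(x)$ decomposition is exactly the bookkeeping the paper's one-line proof suppresses.
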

\begin{proof}
	Follows from Lemmas~\ref{lemma:subsetpartial} and~\ref{lemma:temporary}.
\end{proof}

\

The key difference between simple and partial is that simple finds the kernel
of a solution by examining only the relevant variables. All partial solutions
can be reconstructed by filling in arbitrary values 
(satisfying the meta-flow axiom) for the unspecified variables.
Note that the classical model is faithful to existing semantics of \reo. By
shifting to a partial logic, we can model \emph{pure synchronisation}, which 
is when a synchronisation variable is true and the corresponding data flow variable
is $\bot$. In the simple logic, if the data flow variable is not mentioned,
it will never be assigned a value, reflecting that there is no data
flowing in the corresponding ports, i.e., it is a pure synchronisation port.

\section{Locality}
\label{sec:locality}

With simple logic, there is still a single set of constraints and thus it is
not clear how to exploit this to extract any inherent concurrency nor is it
clear how to partition the constraints to distribute them. Our motivation
is to use (distributed) constraint satisfaction as the basis of a coordination language
between geographically distributed components and services.

The local semantics is based on a \emph{configuration} consisting of
constraints partitioned into blocks of
constraints, 
denoted by $\Psi = \block{\psi_1}{},\ldots,\block{\psi_n}{}$, or
simply $\Psi = \block{\psi_i}{}^{i\in 1..n}$.

\begin{definition}[No-Flow Assignment]
	An assignment $\sigma$ is called a \emph{no-flow assignment} whenever
	$\dom(\sigma)$ $\subseteq\dvar{X}$ and for all $x\in\dom(\sigma)$ we have
	$\sigma(x)=\false$.
\end{definition}

\begin{axiom}[No-Flow Axiom]
	We say that a constraint $\psi$ obeys the no-flow axiom whenever
	there is some no-flow assingment $\sigma$ with $\dom(\sigma)\subseteq \fv{\psi}\cap\dvar{X}$
	such that $\sigma,\I\smodels\psi$.
	
	A configuration $\Psi = \block{\psi_i}{}^{i\in 1..n}$  obeys the no-flow axiom $\ifff$ 
	 each $\psi_i$ obeys the no-flow axiom.
\end{axiom}

From now on, we assume that all configurations satisfy the no-flow axiom.

\begin{definition}[Boundary]
	Given a configuration $\Psi = \block{\psi_1}{},\ldots,\block{\psi_n}{}$,
	define $\borderr{\Psi}{\block{\psi_i}{}}$ as 
	 $\fv{\psi_i}\cap ~$ $\fv{\Psi_{-i}}$,
	where $\Psi_{-i} = \block{\psi_1}{},\ldots,\ldots,\block{\psi_{i-1}}{},\block{\psi_{i+1}}{},\ldots,\block{\psi_n}{}$.
	
	We drop the $\Psi$ subscript from $\borderr{\Psi}{-}$ when it is clear from the context.
\end{definition}

\begin{definition}[Local Firing]
\label{def:localfiring}
	Given a configuration $\Psi = \block{\psi_1}{},\ldots,\block{\psi_n}{}$. We say that:
	\begin{itemize}
  \item	$\sigma$ is a \emph{local firing} for a block $\block{\psi_i}{}$ if 
    and only if $\sigma$ is a simple firing for $\psi_i$ \emph{and}
    for all $x\in\borderr{\Psi}{\block{\psi_i}{}}$ we have $\sigma^P(x)\neq\true$---we call
this the \emph{boundary condition}.\footnote{$\sigma^P$ is defined in 
\autoref{lemma:SFAexpressesMFA}.}
	\item $\sigma$ is a \emph{local firing} for $\Psi$ if and only if
	  $\sigma=\sigma_1\cup\cdots\cup\sigma_{m'}$ such that 
    \begin{enumerate}
      \item $I_1,\ldots,I_{m'},\ldots,I_m$ is a partition of $\set{1..n}$;
      \item $\varphi_i = \bigwedge_{j\in I_i}\psi_j$ where $i\in 1..m$;  and
      \item $\sigma_i$ is a local firing for block $\block{\varphi_i}{}$ where $i\in1..m'$.
    \end{enumerate}
  \end{itemize}
  \vspace{-7mm}
\end{definition}

\noindent
The intuition behind this definition is:
\begin{enumerate}
	\item a local firing can occur in some isolated block \emph{or} the
	conjunction of some blocks \emph{or} in independent (conjunctions of) blocks; and
	\item within each block a simple firing occurs that makes the assumption
		that there is no-flow on its boundary ports.
\end{enumerate}

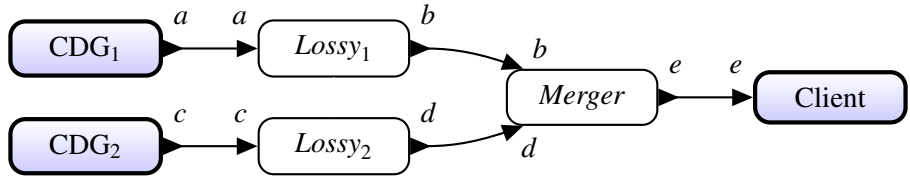
\begin{figure}[htb]
\centering
\begin{tikzpicture}
  [node distance=3.3cm,
      >=triangle 45,thick, bend angle=10,
   dot/.style={circle,inner sep=0.5mm,fill=black},
   closer/.style={node distance=1.3cm}
  ]    
  \node[main,label=10:$a$]                      (cdg1) {CDG$_1$};
  \node[prim,label=170:$a$,label=10:$b$,right of=cdg1] (lossy1) {$\lossy_1$};
  \node[main,closer,label=10:$c$,below of=cdg1] (cdg2) {CDG$_2$};
  \node[prim,closer,label=170:$c$,label=10:$d$,below of=lossy1]
    (lossy2) {$\lossy_2$};
  \draw[-,color=white] (lossy1) to node (m) {} (lossy2);
  \node[prim,label=130:$b$,label=220:$d$,label=10:$e$,right of=m]
    (merger) {\textit{Merger}};
  \node[main,label=170:$e$,right of=merger] (client) {Client};
  
  \draw[>->] (cdg1) to (lossy1);
  \draw[>->] (cdg2) to (lossy2);
  \draw[>->,bend left]  (lossy1.east) to (merger);
  \draw[>->,bend right] (lossy2.east) to (merger);
  \draw[>->] (merger) to (client);

\end{tikzpicture}

\caption{Simple network of constraints: two competing data producers.}
\label{fig:lossymerge}
\end{figure}

\begin{example}
\label{ex:lossymerge}
We introduce a small example in \autoref{fig:lossymerge} that we use to
illustrate the definition of local firings. Let $\phi_i$, where $i\in 1..6$,
be the constraints for CDG$_1$, CDG$_2$, $\lossy_1$, $\lossy_2$,
$\mathit{Merger}$, and Client, respectively. We define $\phi_{3}$ for
$\lossy_1$ and
$\phi_{5}$ for $\mathit{Merger}$ as follows. 
\[\begin{array}{lcl}
  \phi_{3} &=& b\to a ~~\land~~ b\to(\wh{a}=\wh{b})
  \\
  \phi_{5} &=& e \leftrightarrow (b \lor d) ~~\land~~ \lnot(b\land d)
            ~~\land~~ b\to (\wh{e}=\wh{b})
            ~~\land~~ d\to (\wh{e}=\wh{d})
  \end{array}
\]
The remaining constraints can be
derived similarly.

The $\lossy_1$ can arbitrary lose data flowing on $a$, or pass data from $a$ to
$b$; and $\mathit{Merger}$ can pass data either from $b$ to $e$ or from $d$ to
$e$. The configuration that captures the behaviour of the full system is given
by $\Psi_{merge} = \block{\phi_i \sland SFA(\phi_i)}{}^{i\in 1..6}$.
\end{example}

We present some simple firings for the \lossy primitive, and show which of these are also local firings for $\Psi_{merge}$, and we then describe
some more complex local firings for $\Psi_{merge}$.
We omit the formal proof that the conditions for local firings hold for these firings.
Formula $\phi_{3}$ can be written as $\lnot(b \sland \lnot a) \land \lnot(b
\sland \lnot(\wh{a}=\wh{b}))$, and the boundary of $\block{\phi_{3}}{}$ is
$\set{a,\wh{a},b,\wh{b}}$. Valid simple firings for
$\phi_{3}$ are $\set{b\mapsto\false}$,
$\set{a\mapsto\true,b\mapsto\false}$, and $\set{a\mapsto\true,
\wh{a}\mapsto v, \wh{b}\mapsto v}$, for any possible data value $v\in\Data$.
The only simple firing that is also a local firing
is then $\set{b\mapsto\false}$. This means that $\set{b\mapsto\false}$ is also
a local firing of $\Psi_{merge}$.

It is also possible to show that the
solution $\sigma_{\it top}$ corresponding to the flow of some data value $v$
from CDG$_1$ to
Client is a simple firing for $\phi_1\land\phi_3\land\phi_{5}\land\phi_{6}$,
and that the solution $\sigma_{\it bottom}$ corresponding to data being sent
from CDG$_2$ to $\lossy_2$ and being lost is a simple firing for
$\phi_2\land\phi_{4}$. More precisely, we define $\sigma_{\it top} = \set{a\mapsto\true, b\mapsto\true, d\mapsto\false, e\mapsto\true, \wh{a}\mapsto v, \wh{b}\mapsto v, \wh{e}\mapsto v}$ and $\sigma_{\it bottom}=\set{c\mapsto\false, d\mapsto\false}$.
The boundary for both sets of primitives is just 
$\set{d}$. In the solutions $\sigma_{\it top}$ and $\sigma_{\it bottom}$ the 
value of $d$ is never $\true$,
so the boundary conditions hold.
Therefore $\sigma_{\it top}$, $\sigma_{\it
bottom}$, and $\sigma_{\it top}\cup\sigma_{\it
bottom}$ are also local firings of $\Psi_{merge}$.

The local semantics is based on the simple semantics under the no-flow axiom 
assumption.
Thus, a
 simple firing can be trivially seen as a local solution, but a local firing
needs to be extended to be seen as a simple firing. This extension corresponds
exactly to the unfolding of the no-flow axiom for the blocks of constraints
not involved in the local firing. The embedding between these two semantics is 
formalised below.

\begin{lemma}[Simple to Local]
	If $\sigma$ is a simple firing for $\psi$, then 
	$\sigma$ is a local firing for $\Psi=\block{\psi}{}$.
\end{lemma}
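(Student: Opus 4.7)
The plan is to unpack the two clauses of \autoref{def:localfiring} for the trivial configuration $\Psi = \block{\psi}{}$ and observe that the boundary condition is vacuous.

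First I would apply the second bullet of \autoref{def:localfiring} with the trivial partition $m = m' = 1$ and $I_1 = \{1\}$, so that $\varphi_1 = \psi$ and $\sigma_1 = \sigma$. This reduces the goal to showing that $\sigma$ is a local firing for the single block $\block{\psi}{}$ in $\Psi$, i.e., (i) $\sigma$ is a simple firing for $\psi$, and (ii) for every $x \in \borderr{\Psi}{\block{\psi}{}}$ we have $\sigma^P(x) \neq \true$.

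Condition (i) is exactly our hypothesis. For condition (ii), I would compute the boundary directly from its definition: $\borderr{\Psi}{\block{\psi}{}} = \fv{\psi} \cap \fv{\Psi_{-1}}$, where $\Psi_{-1}$ is the configuration obtained from $\Psi$ by removing the only block. Since $\Psi_{-1}$ is the empty configuration, $\fv{\Psi_{-1}} = \emptyset$, and therefore $\borderr{\Psi}{\block{\psi}{}} = \emptyset$. The universal quantifier in the boundary condition ranges over the empty set and is thus vacuously satisfied.

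There is no real obstacle here: the lemma is essentially a sanity check that the local semantics conservatively extends the simple semantics in the degenerate case of a single-block configuration. The only subtlety to flag is that the boundary is defined with respect to $\fv{\Psi_{-i}}$ rather than $\fv{\psi_i}$ itself, so it is immediate that a lone block has empty boundary regardless of $\fv{\psi}$; thus no-flow needs to be assumed nowhere on the ``outside'', and the simple firing $\sigma$ transfers verbatim.
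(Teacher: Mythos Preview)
Your argument is correct and follows the same idea as the paper's proof: the boundary of the sole block in $\Psi=\block{\psi}{}$ is empty, so the boundary condition is vacuous and any simple firing for $\psi$ is immediately a local firing. You spell out the trivial partition $m=m'=1$ explicitly, which the paper leaves implicit, but the core observation is identical.
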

\begin{proof}
	As $\borderr{\Psi}{\block{\psi}{}}=\emptyset$, 
	a simple firing for $\psi$ is also a local firing for $\psi$.
\end{proof}

\begin{lemma}[Local to Simple]
  Let $\sigma$ be a local firing for 
  $\Psi=\block{\psi_1}{},\ldots,\block{\psi_n}{}$, 
  then there exists a $\sigma^\star$ such that
(1)~$\sigma\subseteq\sigma^\star$, (2)~for all $x\in\dom(\sigma^\star)\setminus\dom(\sigma)$
we have $\sigma^\star(x)=\false$, and (3)~$\sigma^\star$ is a simple firing for
$\bigwedge_{i\in1..n}\psi_i$.
\end{lemma}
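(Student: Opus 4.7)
The plan is to extend $\sigma$ by grafting a no-flow simple firing onto each block that has not yet been covered, and then to glue everything together via the composition lemma. First I would unpack the hypothesis: there is a partition $I_1,\ldots,I_m$ of $\{1..n\}$ with $\varphi_i = \bigwedge_{j\in I_i}\psi_j$, and simple firings $\sigma_1,\ldots,\sigma_{m'}$ of the first $m'$ merged blocks satisfying the boundary condition within the configuration $\Psi' = \block{\varphi_1}{},\ldots,\block{\varphi_m}{}$. For each $i \in \{m'+1,\ldots,m\}$, I would apply the no-flow axiom to every $\psi_j$ with $j\in I_i$ to obtain a no-flow assignment $\tau_j$ with $\tau_j,\I \smodels \psi_j$; since any two no-flow assignments agree wherever they overlap (both map to $\false$), their union $\sigma_i^0 := \bigcup_{j\in I_i}\tau_j$ is well defined and, by the overlapping-conjunction rule of $\smodels$, satisfies $\varphi_i$. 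It is itself no-flow, so its $\sigma^P$-extension (in the sense of \autoref{lemma:SFAexpressesMFA}) coincides with itself and satisfies the meta-flow axiom vacuously, yielding $\sigma_i^0,\I \smodels SFA(\varphi_i)$. Then set $\sigma^\star := \sigma \cup \bigcup_{i=m'+1}^{m}\sigma_i^0$.

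The step I expect to be most delicate is verifying pairwise $\frown$-compatibility among the pieces $\sigma_1,\ldots,\sigma_{m'},\sigma^0_{m'+1},\ldots,\sigma^0_m$. Compatibility within the original $\sigma_i$ ($i\leq m'$) is already part of the local-firing hypothesis (their union $\sigma$ must be a function), and any two no-flow pieces trivially agree. The cross case is where the boundary condition does real work: any variable in $\dom(\sigma_i)\cap\dom(\sigma_k^0)$ must be a synchronisation variable (because $\sigma_k^0$ never mentions data flow variables), and such a variable $x$ necessarily lies in $\fv{\varphi_i}\cap\fv{\varphi_k}\subseteq \borderr{\Psi'}{\block{\varphi_i}{}}$. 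The boundary condition $\sigma_i^P(x)\neq \true$, unpacked via the definition of $\sigma^P$, forces both $\sigma_i(x)\neq\true$ and $\wh{x}\notin\dom(\sigma_i)$---the latter because otherwise $\sigma^P$ would add $x\mapsto\true$, violating the condition. Hence $\sigma_i(x)=\false=\sigma_k^0(x)$, and the pieces agree wherever both are defined.

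Once compatibility is established, iterating \autoref{lem:scomposition} $(m-1)$ times gives $\bigwedge_{i}(\varphi_i \sland SFA(\varphi_i)) \equiv (\bigwedge_{i}\varphi_i)\sland SFA(\bigwedge_{i}\varphi_i)$, and since $\bigwedge_{i\in 1..m}\varphi_i$ is merely a reassociation of $\bigwedge_{i\in 1..n}\psi_i$, one obtains $\sigma^\star,\I\smodels (\bigwedge_{i\in 1..n}\psi_i) \sland SFA(\bigwedge_{i\in 1..n}\psi_i)$, which is condition~(3). Condition~(1) is immediate from the construction. Condition~(2) holds because every variable in $\dom(\sigma^\star)\setminus\dom(\sigma)$ was introduced by some $\sigma_i^0$, whose domain lies in $\dvar{X}$ and whose image is $\{\false\}$.
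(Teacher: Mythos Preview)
Your argument is correct and follows essentially the same route as the paper: extend $\sigma$ by no-flow assignments on the uncovered blocks, use the boundary condition to establish $\frown$-compatibility, and take the union. You are in fact more thorough than the paper's proof, which makes a ``without loss of generality'' reduction to singleton partition elements and never explicitly checks the $SFA$ conjunct required by Definition~\ref{def:simplefiring}; your appeal to \autoref{lem:scomposition} fills that gap cleanly.
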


\begin{proof}
	Assume that $\sigma$ is a local firing for $\Psi=\block{\psi_1}{},\ldots,\block{\psi_n}{}$.
	Without loss of generality, we can assume that $\sigma=\sigma_1\cup\cdots\cup\sigma_m$
	where (1)~$m\le n$ and for each $k\in 1..m$ we have that $\sigma_k$ is a local
	firing for $\block{\psi_k}{}$. From the no-flow axiom, we can have a no-flow assignment
	$\sigma^\star_j$ for each $j\in m+1..n$ such that $\sigma^\star_j,\I\smodels\psi_j$.
	From the boundary condition, we can infer that for each $\sigma_k$ and $\sigma^\star_j$
	the condition $\sigma_k\frown\sigma^\star_j$ holds.  Thus, for
	 $\sigma^\star=\sigma\cup\bigcup_j\sigma^\star_j$,
	we have $\sigma^\star,\I\smodels\bigwedge_{i\in1..n}\psi_i$. 
\end{proof}

\section{State}
\label{sec:state}

\subsection{State-machine}
\label{sec:statemachine}

We follow the encoding of stateful connectors presented by Clarke et 
al.~\cite{reo:deconstructing}.
To encode stateful connectors we add $\mathit{state}_p$ and
$\mathit{state'}_p$ to the term variables, for each $p \in P$ 
corresponding to a stateful primitive. A state machine with states
states $q_1, \ldots, q_n$  is encoded as a formula of the form:
\[ \psi = \mathit{state}_p = q_1 \to \psi_1 \land \ldots \land
          \mathit{state}_p = q_n \to \psi_n
\]
where $\psi_1,\ldots,\psi_n$ are 
constraints representing the transitions from each state.
For each firing $\sigma$, the value of $\sigma(\mathit{state}'_p)$ determines 
how the connector evolves, giving the value of the next state.

We illustrate this encoding by an example, presenting 
the constraints encoding the state machine of a \fifo buffer
from \autoref{tab:statelessex}.
The state can be either $\mathtt{empty}$ or
$\mathtt{full}(d)$, where $d \in \Data$, and $\mathtt{empty}$ and 
$\mathtt{full}$ are function symbols in $\dvar{F}$. We then define
$\mathit{fifo}$-constraint to be $\mathit{state}_{\it fifo} =
\mathtt{empty} \to \psi_e \land \mathit{state}_{\it fifo} = \mathtt{full}(d)
\to \psi_f$, where $\psi_e$ and $\psi_f$ are the upper and lower labels of the
following diagram, respectively:
\[\begin{tikzpicture}  [shorten >=1pt,>=stealth',node distance=6cm,auto,initial text=,
   bend angle=12,
   eliptic/.style={rectangle,rounded corners=3.5mm,draw=black,
   fill=white, minimum height=8mm}]
    \node[eliptic] (e) {$\mathtt{empty}$};
  \node[eliptic,right of=e] (f) {$\mathtt{full}(d)$};
  \path[->] (e) edge [bend left] node
            {\small $\lnot b \land a\to \mathit{state'}_{\it
              fifo}=\mathtt{full}(\wh{a})$} (f)
            (f) edge [bend left] node
            {\small $\lnot a \land b\to \wh{b}=d \land
            \mathit{state'}_{\it fifo}=\mathtt{empty}$} (e);
\end{tikzpicture}
\]
To complete the encoding, we add a formula describing the present state to the
mix. In the example, the formula $\mathit{state}_{\it fifo}=\mathtt{empty}$
records the fact that the \fifo is in the empty state, whereas
$\mathit{state}_{\it fifo}=\mathtt{full}(d)$ records that it is in the full
state, containing data $d$. The full constraint for the \fifo primitive is now
(refining the constraints in Table~\ref{tab:statelessex}):
\[
\mathit{state}_{\it fifo} =\mathtt{empty} \to \psi_e ~~\land ~~\mathit{state}_{\it fifo} =
\mathtt{full}(d) \to \psi_f ~~\land~~ \mathit{state}_{\it fifo} = empty,
\]

\subsection{Constraint satisfaction-based engine}
\label{sec:engine}

A constraint satisfaction-based engine holds a configuration with the current set of constraints and operates in \emph{rounds}, each of which
consists of a \emph{solve} phase and an \emph{update} phase, which uses the
firing to update the constraints and to model the transition to a new state.
This is depicted in \autoref{fig:phases}.

\begin{figure}[htb]
\centering
\begin{tikzpicture}
  [node distance=5.5cm,
   text height=3.5ex,text depth=2.5ex,
   >=triangle 45, bend angle=20,
   place/.style={rectangle,minimum width=30mm,rounded corners=5mm,
       inner sep=0.5mm,thick,draw=black,top color=white,bottom color=black!20}
  ]    
  \node[place] (A)              
    {\wrap{ Configuration\\
     $\tpl{\rho_i\land\epsilon_i}^{i \in 1..n}$}};
  \node[place] (B) [right of=A]
    {\wrap{ Firing\\$\sigma$}}
    edge [<-,bend right] node[fill=white] {\footnotesize Solve}  (A)
    edge [->,bend left]  node[fill=white] {\footnotesize Update} (A);
\end{tikzpicture}

\caption{Phases of the constraint satisfaction-based engine.}
\label{fig:phases}
\end{figure}
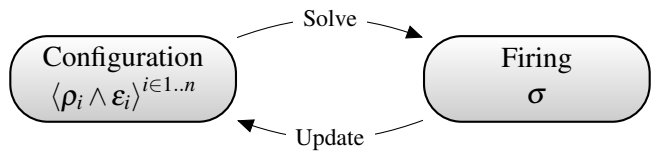

Each block of the configuration is a conjunction of two constraints $\tpl{\rho \land \epsilon}$, where $\rho$
is persistent and $\epsilon$ is ephemeral. Persistent constraints are
eternally true, and can be either (normal) stateless constraints, stateful
constraints, or the conjunction of persistent constraints. Ephemeral
constraints describe the present state of the stateful constraints.
Configurations are updated at each round.
Let \I be an interpretation and, for each $i \in 1..n$, let $P_i$ be a (possibly 
empty) set of names of stateful constraints.
A full round can be
represented as follows, where the superscript indicates the round number:
\begin{equation*}
	\tpl{\rho_i\land\epsilon_i^m}^{i\in 1..n} \xrightarrow{solve} \tpl{\sigma^m}
	\xrightarrow{update} \tpl{\rho_i\land\epsilon_i^{m+1}}^{i\in 1..n} 
\end{equation*}
satisfying the following:
\begin{align}
& \sigma^m,\I\smodels \bigwedge_{i \in 1..n}\rho_i\wedge\epsilon_i^m   \tag{solve}\\
& \begin{array}{@{}lcl@{}}
    \epsilon_i^{m+1} &\equiv& \bigwedge \set{
    state_p = \sigma^m(\mathit{state'}_p)
    ~|~
    p \in P_i \mbox{ and }
    \sigma^m(\mathit{state'}_p) \neq \bot} ~\land
    \\ 
    &&
    \bigwedge \set{ \epsilon_i^m ~|~
    p \in P_i \mbox{ and }
    \sigma^m(\mathit{state'}_p) = \bot}
  \end{array}
  \tag{update}
\label{eq:nexteps}
\end{align}

In round $m$, the solve phase finds a solution $\sigma^m$, and the
update phase replaces the definition of $\epsilon^m$ by $\epsilon^{m+1}$ for
round $m$, whenever the variable $\mathit{state'}$ is defined.
A correctness result of this approach with respect to \reo, for the classical
semantics, has been presented by Clarke et al.~\cite{reo:deconstructing}. The
authors use the constraint automata semantics for \reo~\cite{reo:ca06} as the
reference for comparison. The present approach adapts the previous one
by accounting for partial solutions of the constraints, which means that
only some of the state variables are updated.

\section{Interaction}
\label{sec:interaction}

We now extend the model with means for external interaction.

\subsection{External functions, predicates and constraints}
\label{sec:externallogic}

We defined in \autoref{sec:classical} a core syntax for logic formul\ae, extended with state variables in \autoref{sec:state}. Satisfaction of formul\ae\ is defined with respect to an assignment $\sigma$ defining the values of variables, and an interpretation \I giving meaning to predicates.
We now extend the syntax of the logic and the definition of interpretation
\I, by introducing new symbols whose interpretation is also given by \I.
These symbols are \emph{external predicate} symbols
$\evar{p} \in \edom{P}$, \emph{external function} symbols $\evar{f} \in
\edom{F}$, and  \emph{external constraints}  $\evar{c} \in \edom{C}$. We also introduce
\emph{communication variables} $\evar{k} \in \edom{K}$ whose value in the
solution of a round can be communicated to the outside world. Formul\ae\ are
now given by the following syntax:
\begin{eqnarray*}
	\psi & ::= &
    \true ~~|~~ x ~~|~~ \psi_1 \land \psi_2 ~~|~~ \lnot \psi ~~|~~ 
    p(\lotsof{t}) ~~|~~ \evar{p}(\lotsof{t}) ~~|~~ 
    \evar{c}(\lotsof{\psi},\lotsof{t})\\
 	t & ::= &
 	  \wh{x} ~~|~~ \mathit{state}_p ~~|~~ \mathit{state'}_p ~~|~~
 	  \evar{k} ~~|~~ f(\lotsof{t}) ~~|~~ \evar{f}(\lotsof{t})
\end{eqnarray*}

Use $\lotsof{t}$ as a shorthand for $t_1,\ldots,t_n$.
We extend the definition of interpretation to be 
an arity-indexed family of \emph{partial} map from $P_n\times
\T^{n}$ to $\set{\true,\false}$; from $\evar{P}_n\times \T^{n}$ to $\set{\true,
\false}$; from
$\evar{F}_n\times \T^{n}$ to ground terms; and from $\evar{C}$ to a 
term with $l$  formul\ae\ parameters and $k$ term parameters. 

We also extend the $\mathit{Val}_{\sigma,\I}$ function, which is now
parameterized on $\sigma$ and $\I$. This function replaces variables $v$ by
$\sigma(v)$ and $\evar{f}(t_1,\ldots,t_n)$ by
$\I(\evar{f},\mathit{Val}_{\sigma,\I}, \ldots, \mathit{Val}_{\sigma,\I})$,
or is undefined if any component is undefined.

The extension of the syntax of the logic requires the addition of two new
(dis)satisfaction rules:
\[
\begin{array}{ll}
  \sigma,\I\smodels \evar{p}(t_1,\ldots,t_n) &\ifff\quad         \evar{p}(\mathit{Val}_{\sigma,\dvar{I}}(t_1), \ldots,\mathit{Val}_{\sigma,\dvar{I}}(t_n))\mapsto\true\in\I
        \\&~~~~\quad\mbox{and }
        dom(\sigma) = \mathit{fv}(p(t_1\ldots,t_n))
  \\     
  \sigma,\I\smodels \evar{c}(\psi_1,\ldots,\psi_m,t_1,\ldots,t_n) \quad 
  &\ifff\quad
    \sigma,\I \smodels
    \psi[\psi_1/v_1,\ldots,\psi_m/v_m,
    t_1/v_{m+1},\ldots,t_m/v_{m+n}]
        \\&
    \textrm{where }\evar{c} \mapsto \lambda(v_1,\ldots,v_{m+n}).\psi \in \I \\
\\
\sigma,\I\cosmodels \evar{p}(t_1,\ldots,t_n) &\ifff\quad       \evar{p}(\mathit{Val}_{\sigma,\dvar{I}}(t_1), \ldots,\mathit{Val}_{\sigma,\dvar{I}}(t_n))\mapsto\false\in\I
      \\&~~~~\quad\mbox{and }
      dom(\sigma) = \mathit{fv}(p(t_1\ldots,t_n))
 \\     
\sigma,\I\cosmodels \evar{c}(\psi_1,\ldots,\psi_m,t_1,\ldots,t_n) \quad 
&\ifff\quad
  \sigma,\I \cosmodels
  \psi[\psi_1/v_1,\ldots,\psi_m/v_m,
  t_1/v_{m+1},\ldots,t_m/v_{m+n}]
    \\&
  \textrm{where }\evar{c} \mapsto \lambda(v_1,\ldots,v_{m+n}).\psi \in \I
\end{array}
\]
The notation $\lambda(v_1,\ldots, v_n).\psi$ denotes that $\psi$ is a formula
where $\set{v_1,\ldots, v_n} \subseteq \fv{\psi}$.  Each $v_i$ is a variable
that acts as a placeholder for $\psi$, that is substituted when evaluating the
external variable mapped to the $\lambda$-term, hence the $\lambda$-notation.


\subsection{External world}

The constraint-based engine introduced in \autoref{sec:engine} describes the
evolution of a configuration (a set of blocks of constraints). We now assume
the existence of a set of primitives $P$, each of which provides a single
block of constraints to  the engine. These primitives can be one of
three kinds~\cite{reo:deconstructing}:
\begin{description}
 \item[internal, stateless] denoted by $P_{no}$. The underlying constraints
 involve neither state variables nor communication variables in $\edom{K}$, and all
 constraints are persistent---represented by setting the ephemeral constraints
 to $\epsilon_p = \true$, where $p\in P_{no}$.

 \item[internal, stateful] denoted by $P_{int}$. Such primitives have
 constraints over the state variable pair $\mathit{state}_p$ and
 $\mathit{state'}_p$, where $\mathit{state}_p$ represents the value of the
 current state of $p \in P_{int}$ and $\mathit{state'}_p$ the value of the
 next state. The emphemeral constraint denotes the current state and is always
 of form $\epsilon_p\equiv \mathit{state}_p = t$, for some ground term $t$. No
 communication variables may appear in the constraints.

 \item[external] denoted by $P_{ext}$. Such primitives express constraints in
 terms of a communication variable $\evar{k}$ through which data is passed
 from a primitive $p\in P_{ext}$ to the outside world. The outside world then
 sends a  new set of constraints to represent $p$'s next step behaviour.
 No state variables can appear in the constraints, as it is assumed that the
 state information is handled externally and incorporated into the constraints
 sent during the update phase.
\end{description}

We assume that the constraints $\psi_p$ provided by each primitive $p\in P$ can only 
have a fixed set of free variables, denoted by $\fv{p}$.
Note that $\fv{\psi_p} \subseteq \fv{p}$.
The relation between external symbols, communication variables and the
external primitives in $P_{ext}$ is made via an \emph{ownership} relation. 
That is, each external symbol and each communication variable is \emph{owned}
by a unique primitive in $P_{ext}$.

\begin{definition}[Ownership]
 Let $\dvar{O} =
 \edom{F}\cup\edom{P}\cup\edom{C}\cup\edom{K}$. Each $o \in \dvar{O}$ is
 managed by exactly one $p \in P_{ext}$. This is denoted using 
 function $\mathit{own}: \dvar{O} \to  P_{ext}$. 
We may write  $\mathbf{k}_p$ to indicate that $\mathit{own}(\evar{k})=p$.

 We write  $\block{\psi}{Q}$ to indicate that the constraints in $\psi$
 are owned by primitives $Q$, where $Q\subseteq P$.
\end{definition}

\begin{example}
We extend the constraints of our running example from \autoref{tab:statelessex}, presented in \autoref{tab:interactiveex}.
Using the updated constraints from \autoref{tab:interactiveex}, the global
constraint is given by the configuration $\block{\psi_i}{}^{i \in 1..7}$, the synchronous variables are $\dvar{X} = \set{a,b,c}$, the only
uninterpreted predicate symbol is equality, $\mathbf{more} \in \edom{C}$ is an
external constraint symbol, $\mathbf{result}\in\edom{K}$ is a communication
variable, and $\mathbf{UserAppr} \in \edom{P}$ is an external predicate
symbol. Furthermore, $own(\evar{more})=\textrm{CDG}$, $own(\mathbf{result})=\textrm{Client}$, and $own(\evar{UserAppr})=\textrm{User approval}$.

\begin{table}[htb]
{\centering
$\begin{array}{clc}
\toprule
~\hspace{15mm} Primitive \hspace{15mm}~ & \multicolumn{2}{c}{Constraint} \\
\cmidrule(lr){1-1}\cmidrule(lr){2-3}
\myprim{
  \node at (-1.5,0) {~};
  \node[main] (x) at (0,0) {CDG};
  \node at (1.5,0) {a} edge[<-] (x);}
& \psi_1 = &     {a \to (a \land (\widehat{a} = \mathtt{d_1} \lor \widehat{a} =
    \mathtt{d_2} \lor \widehat{a} = \mathtt{d_3} \lor     \mathbf{more}(\wh{a})))}
\\
\myprim{
  \node at (-1.5,0) {c} edge [->] (x);
  \node[main] (x) at (0,0) {Client};
  \node at (1.5,0) {~};}
& \psi_2 = &
    { \mathbf{result} = \widehat{b} }
\\
\myprim{
  \node[prim] (x) at (0,0) {User approval};
  \node at (-1.8,0) {c} edge[->] (x);
  \node at (1.8,0) {~};}
& \psi_3 = &     {c \to (c \land \mathbf{UserAppr}(\widehat{c}))}
\\
\bottomrule
\end{array}$

}
\caption{Updated (interactive) constraints of a set of primitives.}
\label{tab:interactiveex}
\end{table}
\end{example}

The updated constraints in \autoref{tab:interactiveex} illustrate the usage of
the extensions to the logic. External constraints can model \emph{on-the-fly}
constraint generation. The interpretation of $\evar{more}$ can refer to new
external constraints, and this process can be repeated an unbounded number of
times. Communication variables provide a mean to communicate the result to the
external world, as the constraints of Client show, via the variable
$\textbf{result}$. Finally, the external
predicate $\evar{UserAppr}$ in $\psi_3$ illustrates the possibility of asking
external primitives if some predicates hold for specific data values.

\paragraph{Example of the execution of the engine}

Recall Example~\autoref{ex:lossymerge}, which is based in a set of primitives
$P$. We partition $P$ into $Q$ and $R$, where $Q=\set{\textrm{Client},
\textrm{CDG}_1, \lossy_1, \mathit{Merger}}$, and $R=\set{\textrm{CDG}_2,
\lossy_2}$. To provide a better understanding of how the engine  evolves
with respect to the external interaction, we present a possible
trace of the evolution of the constraints. The relation $\goesby{~~}^{*}$ denotes the
evolution of the constraints by either applying transformations that preserve
the set of possible solutions, or by extending the interpretation \I based on
external interaction. The initial persistent and 
ephemeral constraints of each primitive $p\in P$ are denoted by $\rho_p$ and
$\epsilon_p$, respectively.
As in Example~\autoref{ex:lossymerge}, $\phi_i$, where $i\in 1..6$, are the
constraints for CDG$_1$, CDG$_2$, $\lossy_1$, $\lossy_2$, $\mathit{Merger}$,
and Client, respectively.
\[\begin{array}{rcl@{~~~~~}rcl}
\rho_{\textrm{CDG}_1} &=& SFA(a)  &
  \epsilon_{\textrm{CDG}_1} &=& \phi_1 \\
\rho_{\textrm{CDG}_2} &=& SFA(c)  &
  \epsilon_{\textrm{CDG}_2} &=& \phi_2 \\
\rho_{\textrm{Client}} &=& \phi_6 \sland SFA(e)  &
  \epsilon_{\textrm{Client}} &=& \true \\
\end{array}
~~~~~~~~~~~~~~~~~
\begin{array}{rcl@{~~~~~}rcl}
\rho_{\lossy_1} &=& \phi_{3} \sland SFA(a,b)  &
  \epsilon_{\lossy_1} &=& \true \\
\rho_{\lossy_2} &=& \phi_{4} \sland SFA(c,d)  &
  \epsilon_{\lossy_2} &=& \true \\
\rho_{\mathit{Merger}} &=& \phi_{5} \sland SFA(b,d,e)  &
  \epsilon_{\mathit{Merger}} &=& \true \\
\end{array}
\]
The initial configuration of the system is given by the set $\block{\rho_p
\land \epsilon_p}{p}^{p \in P}$. We write $\epsilon_p^n$ to denote the ephemeral
constraint of $p$ in round $n$. During the execution of the engine, both the
constraints and the interpretation changes during the solving stage, which we
make explicit by using a pair with the interpretation and the constraints. The
evolution of a possible trace for our example and its explanation follows:
\[
\begin{array}{l}
1\left\{
  \begin{array}{r@{~~}l}
    &\I,\block{\rho_p \land \epsilon_p^1}{p}^{p \in P}
    \\
    \goesby{~~}^{*}
    &\I,\block{\phi_1\land\phi_3\land\phi_5\land\phi_6\land  
        SFA(\set{a,b,d,e})}{Q},
    \block{\phi_2\land\phi_4\land SFA(\set{c,d})}{R}
    \\
    \goesby{~~}^{*}
    &\I,\block{(a \land b \land e \land \evar{more}(\wh{a}) \land
     \wh{b}=\wh{a} \land \wh{e}=\wh{b} \land \evar{result}=\wh{e}) \slor 
     \psi_{q}}{Q},
     \block{(c \land \wh{c}=\mathtt{d_2} \land \lnot{d}) \slor \psi_{r}}{R}
  \end{array}
\right.
\vspace*{2mm}\\
2\left\{
  \begin{array}{r@{~~}l}
    \,\,\goesby{~~}^{*}
    &\I,\block{(a \land b \land e \land \evar{more}(\wh{a}) \land
     \wh{b}=\wh{a} \land \wh{e}=\wh{b} \land \evar{result}=\wh{e}) \slor \psi_{q}}{Q},
     \block{\rho_r \land \epsilon_r^2}{r}^{r \in R}
  \end{array}
\right.
\vspace*{2mm}\\
3\left\{
  \begin{array}{r@{~~}l@{~}l}
   \goesby{~~}^{*}
    &\I',&\block{(a \land b \land e \land \wh{a}=\mathtt{d_4} \land 
     \evar{evenmore}(\wh{a}) \land
     \wh{b}=\wh{a} \land \wh{e}=\wh{b} \land \evar{result}=\wh{e}) \slor \psi_{q}}{Q},
     \\&&
     \block{\rho_r \land \epsilon_r^2}{r}^{r \in R}
    \\
    \goesby{~~}^{*}
    &\I',&\block{(a \land b \land e \land \wh{a}=\mathtt{d_4} \land
     \wh{b}=\mathtt{d_4} \land \wh{e}=\mathtt{d_4} \land \evar{result}=\mathtt{d_4}) \slor 
     \psi'_{q}}{Q},
     \block{\rho_r \land \epsilon_r^2}{r}^{r \in R}
  \end{array}
\right.
\vspace*{2mm}\\
4\left\{
  \begin{array}{r@{~~}l}
    \,\,\goesby{~~}^{*}
    &\I,
     \block{\rho_q \land \epsilon_q^2}{q}^{q \in Q},
     \block{\rho_r \land \epsilon_r^2}{r}^{r \in R}
  \end{array}
\right.
\end{array}
\]

We now look in more detail into each of the transitions applied above.
\begin{enumerate}

  \item The blocks of constraints are joined into two blocks based on the 
	partition $Q$ and $R$ (of $P$). The persistent and
  ephemeral constraints are replaced by their definitions. 
  The constraints inside each new block are manipulated 
  following traditional constraint solving techniques until we obtain
  a disjunction of cases. We then focus on one specific disjunct in each block.

  \item The block tagged with $R$ (in the last step of (1)) 
 has a trivial solution $\set{c\mapsto \true,
  d \mapsto \false, \wh{c} \mapsto \mathtt{d_2}}$ that does not cause any
  state change. As the boundary conditions hold ($d \neq \true$),  we
  can perform the update phase on this block. Hereafter  the individual blocks
  for each primitive $r\in R$ are restored, updating the ephemeral constraints to
  $\epsilon_r^2$. In this case there is no state change, i.e., $\epsilon_r^2 =
  \epsilon_r^1$.
  
  \item Interaction with the external world is performed to extend the
  interpretation for
  $\evar{more}$,  obtaining $\I' = \I
  \cup \set{\evar{more} \mapsto \lambda(v). (v=\mathtt{\mathtt{d_4}} \lor
  \evar{evenmore}(v))}$. External predicate $\evar{more}(\wh{a})$ is replaced
  by its new interpretation, and  the manipulation of the
  constraints continues as in (1), until we find a new conjunction for the first block
  which satisfies the trivial solution.
  
  \item In the last step the update phase is performed on the first block. Note
  that the trivial solution obeys the boundary conditions ($d \neq \true$). 
 The individual blocks for each primitive $q \in Q$ are restored, using the
  corresponding persistent constraints and the new ephemeral constraints for
  round 2. In this case the ephemeral constraint for the primitive CDG$_1$
  is updated, while the other primitives in $Q$ keep the same ephemeral
  constraints. The update of the constraints of CDG$_1$ is performed by
  querying the external primitive CDG$_1$ for its new ephemeral constraints,
  providing the value of the communication variable of CDG$_1$
  ($\mathbf{result} = \mathtt{d_4}$). We call this new constraint
  $\epsilon_{CDG_1}^2$. After the update, the interpretation ``forgets" the
  value of $\evar{more}$ and is reset to $\I$.

\end{enumerate}
We leave for future work the formalisation of the rules that describe the evolution of the constraints and the interpretation during the constraint solving process.

\subsection{Discussion}

Local firings can be discovered concurrently. Furthermore, the explicit
connection introduced by the ownership relation, from blocks of constraints
and external symbols to external primitives, paves the way for
constraint-solving techniques that interact with the external world while
searching for solutions for constraints (concurrently). We start by discussing
some of our motivation to introduce the local satisfaction relation, and we
then explore some more details of our proposed interactive engine.

\subsubsection*{Why locality?}

Some of the inspiration for developing a semantic framework that takes into account
locality aspects of a model that requires global synchronisation came from
experiments undertaken during the development of a distributed implementation of
Reo.\footnote{\url{http://reo.project.cwi.nl/cgi-bin/trac.cgi/reo/wiki/Tools\#
DistributedReoEngine}} This implementation is incorporated in the Eclipse
Coordination Tools, and its distributed entities roughly correspond to primitives
in our constraint approach. There we also make a similar distinction between
the two phases of the engine. While developing the distributed engine, we
realised the following useful property of the FIFO$_1$ channels: 
in each round it is sufficient to consider the two halves of
a FIFO independently. 
This property went against the implicit globality assumption  in 
current Reo models, and was never clearly exploited by Reo.
This locality property becomes particularly relevant in the
extreme case of a \reo connector consisting of several FIFO$_1$ channels are composed
sequentially.
In the communication between any two FIFO's from this sequence, traditional
\reo models require all the FIFO's to agree, while our distributed
implementation requires only the agreement of the two FIFO's involved in the
communication.

In more complex \reo connectors, such as the \emph{multiple
merger},\footnote{\url{http://homepages.cwi.nl/~proenca/webreo/generated/
multimerge/frameset.htm}} it is possible to see that most of the steps
involve only the flow on a small part of the connector. It is also possible to
find \emph{islands} of synchronous regions, with FIFO channels in the boundaries, where
our boundary condition holds for the possible solutions.
The approach described in this paper not only justifies the correctness of the 
locality obtained by the FIFO$_1$ channels, but it also generalises it to arbitrary 
solutions where the boundary conditions hold on the boundaries of the synchronous region.

\subsubsection*{Interactive engine}

We now explore some characteristics of the engine described in \autoref{sec:engine}, using the logic with external symbols introduced in \autoref{sec:externallogic}.
We assume that the interpretation \I is initially empty regarding external symbols. During the solve stage, \I is extended every time the external world provides new information about these external symbols. Similarly, the engine can request for the interpretation of specific symbols whenever these are required to find solution.
The communication variables play a similar role to state variables. Instead of
being directly used in the next round, their value is sent to the primitive
that owns the variable, and the engine waits for new (ephemeral) constraints
from that primitive. These constraints are then used in the next round.

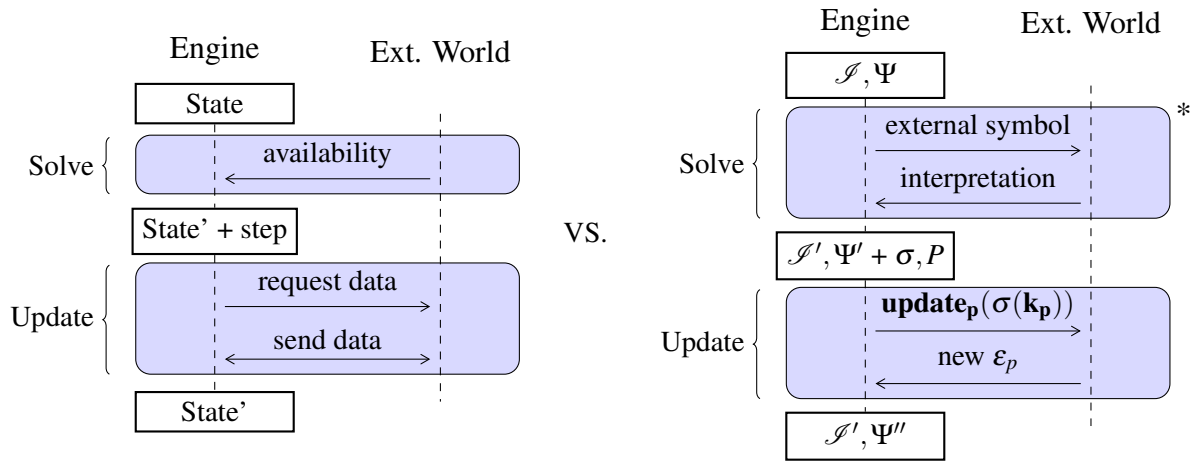
\begin{figure}[htb]
{\centering

\wrap{\begin{tikzpicture}
  [ >=angle 60,     state/.style={rectangle,minimum width=21mm,                  thick,draw=black},
                      lifeline/.style={dashed},
    title/.style={font=\large},
    node distance = 7mm,
    fststep/.style={node distance = 10mm},
    lststep/.style={node distance = 7mm},
    otherside/.style={node distance = 30mm},
    glued/.style={node distance = 0mm},
    stage/.style={blue!15,draw=black,rounded corners=2mm},
    brace/.style={midway, left=2pt}
  ]

\node[state]                 (e1)  {State}; 
\node[below of=e1,fststep]   (e2) {};
\node[below of=e2,lststep,state]
                             (e6) {State' + step}; 
\node[below of=e6,fststep]   (e7) {};
\node[below of=e7]           (e8) {};
\node[below of=e8,state]     (e9) {State'};

\node[right of=e1,otherside] (x1) {}; 
\node[below of=x1,fststep]   (x2) {}; 
\node[below of=x2,lststep]   (x6) {}; 
\node[below of=x6,fststep]   (x7) {}; 
\node[below of=x7]           (x8) {}; 
\node[below of=x8]           (x9) {}; 
\node[title,above of=e1] (eng) {Engine};
\node[title,above of=x1] (world) {Ext. World};

\draw[lifeline] (x1) -- (x9) {};
\path(e1) edge[lifeline] (e6)
     (e6) edge[lifeline] (e9);

\draw[<-,auto] (e2) -- node {availability}           (x2) {};
\draw[->,auto] (e7) -- node {request data} (x7) {};
\draw[<->,auto] (e8) -- node {send data}        (x8) {};

\begin{pgfonlayer}{background}
  \fill[stage,name=solve]
     ($(e2) + (-10.5mm,5.8mm)$) rectangle ($ (x2) + (10.5mm,-2mm) $); 
  \fill[stage,name=update]
     ($(e7) + (-10.5mm,5.8mm)$) rectangle ($ (x8) + (10.5mm,-2mm) $); 
\end{pgfonlayer} 

\draw[decorate,decoration=brace] 
   ($ (e2) + (-14mm,-2mm)$)  -- ($(e2) + (-14mm,5.5mm)$) node[brace] {Solve};
\draw[decorate,decoration=brace] 
   ($ (e8) + (-14mm,-2mm)$)  -- ($(e7) + (-14mm,5.5mm)$) node[brace] {Update};

\end{tikzpicture}}
~~~~
VS.
~~~~
\wrap{\begin{tikzpicture}
  [ >=angle 60,     state/.style={rectangle,minimum width=21mm,                  thick,draw=black},
                      lifeline/.style={dashed},
    title/.style={font=\large},
    node distance = 7mm,
    fststep/.style={node distance = 10mm},
    lststep/.style={node distance = 7mm},
    halfstep/.style={node distance = 5mm},
    otherside/.style={node distance = 30mm},
    stage/.style={blue!15,draw=black,rounded corners=2mm},
    brace/.style={midway, left=2pt}
  ]

\node[state]                 (e1)  {$\I,\Psi$}; 
\node[below of=e1,fststep]   (e4) {};
\node[below of=e4]           (e5) {};
\node[below of=e5,lststep,state]
                             (e6) {$\I',\Psi'$ + $\sigma,P$}; 
\node[below of=e6,fststep]   (e7) {};
\node[below of=e7]           (e8) {};
\node[below of=e8,state]     (e9) {$\I',\Psi''$};

\node[right of=e1,otherside] (x1) {}; 
\node[below of=x1,fststep]   (x4) {}; 
\node[below of=x4]           (x5) {}; 
\node[below of=x5,lststep]   (x6) {}; 
\node[below of=x6,fststep]   (x7) {}; 
\node[below of=x7]           (x8) {}; 
\node[below of=x8]           (x9) {}; 
\node[title,above right of=x2] (star) {~~~~~~~~~~~~~~*};
\node[title,above of=e1] (eng) {Engine};
\node[title,above of=x1] (world) {Ext. World};

\draw[lifeline] (x1) -- (x9) {};
\path(e1) edge[lifeline] (e6)
     (e6) edge[lifeline] (e9);

\draw[->,auto] (e4) -- node {external symbol}   (x4) {};
\draw[<-,auto] (e5) -- node {interpretation}    (x5) {};
\draw[->,auto] (e7) -- node {$\mathbf{update_p(\sigma(\evar{k}_p))}$} (x7) {};
\draw[<-,auto] (e8) -- node {new $\epsilon_p$}        (x8) {};

\begin{pgfonlayer}{background}
  \fill[stage,name=solve]
     ($(e4) + (-10.5mm,5.8mm)$) rectangle ($ (x5) + (10.5mm,-2mm) $); 
  \fill[stage,name=update]
     ($(e7) + (-10.5mm,5.8mm)$) rectangle ($ (x8) + (10.5mm,-2mm) $); 
\end{pgfonlayer} 

\draw[decorate,decoration=brace] 
   ($ (e5) + (-14mm,-2mm)$)  -- ($(e2) + (-14mm,5.5mm)$) node[brace] {Solve};
\draw[decorate,decoration=brace] 
   ($ (e8) + (-14mm,-2mm)$)  -- ($(e7) + (-14mm,5.5mm)$) node[brace] {Update};
\end{tikzpicture}}

}
\caption{Interaction with \reo components (left) and with our view of 
components (right).}
\label{fig:interactiondiag}
\end{figure}

The interaction between components and the engine differs in our model with
respect to other descriptions of \reo, in that the components play a more
active role in the coordination, as depicted in \autoref{fig:interactiondiag}.
The usual execution of \reo~\cite{reo:tools} is also divided in two main 
steps, but the
interaction is more restricted in previous models of \reo. In the solve stage
the components attempt to
write or take a data value. In the update stage the engine requests or sends
data values, and restarts the solve stage.
In our model we blur the distinction between connectors and components.
During the solve stage
components can provide constraints with external symbols, that will only be prompted by the
engine as required. During the update stage  the engine sends the
components the values of their communication variables, if defined, and waits
for their new constraints for the next round. Our approach therefore offers components
the ability to play a more active and dynamic role during the
coordination.

\section{Conclusion and related work}
\label{sec:conclusion}

Despite Wegner's interesting perspective on coordination as constrained interaction~\cite{wegner}, little work takes this perspective literally, representing coordination as constraints. 
Montanari and Rossi express coordination as a constraint satisfaction
problem, in a similar and general way~\cite{montanari:98}. 
They view networks as graphs, and use the tile model to
distinguish between synchronisation and sequential composition of the
coordination pieces. In our approach, we explore a more concrete coordination
model, which not only captures the semantics of the \reo coordination
language~\cite{reo:primer}, but also extends it with 
a refined notion of locality and a variety of notions of
external interaction not found in Montanari and Rossi's work.

Minsky and Ungureanu took a practical approach and introduced the
Law-Governed Interaction (LGI) mechanism~\cite{lgi}, implemented by the Moses
toolkit. The mechanism targets distributed coordination of heterogeneous 
agents,
enforcing laws that are defined using constraints in a Prolog-like language.
The main innovation is the enforcement of laws by certified controllers that
are not centralised. Their laws, as opposed to our approach, are not global,
allowing them to achieve good performance, while compromising the scope of 
the constraints. Our approach can express constraints globally, but can
solve them locally where possible. 

In the context of the \reo coordination language, Lazovik et 
al.~\cite{laz:icsoc07}  provide a
choreography framework for
web services based on \reo, where they use constraints to solve a coordination
problem. This work is an example of a concrete application of constraints to
coordination, using a centralised and non-compositional approach. We formalised
and extended their ideas in our work on Deconstructing Reo~\cite{reo:deconstructing}.
The analogy between
\reo constraints and constraint solving problems is also pursued by 
Kl\"uppelholz and Baier~\cite{reo:mc:baier06}, who describe a symbolic model
checking for \reo, and by Maraikar et al.~\cite{reo:mashups}, who present a
service composition platform based on \reo using a mashup's data-centric
approach. The latter can be seen as an scenario where constraint solving
techniques are used for executing a \reo-based connector.

One of the main novelties with respect to our previous
work~\cite{reo:deconstructing} is the introduction of a partial semantics to
the logic, and techniques for exploiting this semantics. 
Partiality favours solutions that address
only a relevant subset of variables, and can furthermore  capture solutions
in only part of a network, which cannot be considered independently in a
classical setting. Other applications of partial or 3-valued
logic exist~\cite{partiallogic,grumberg:07}, and model checking
and SAT-based algorithms exist for such logics.
We do not address verification of partially defined systems, but instead we
focus on the specification and execution of these systems. Verification of
systems specified by a partial logic would require to assume a fixed
interpretation of external symbols, but still presents an interesting
challenge, which is out of the scope of this paper.
Note that the constraint solving of our partial logic is different from the
partial constraint satisfaction problem
(PCSP)~\cite{pcsp},
which consists of
finding solutions satisfying a constraint problem $P$ that are as close as
possible to the original problem, although they may be  different in most cases.

Faltings et al.~\cite{faltings:05} explore interactive constraint satisfaction,
which bears some similarity to our approach. They present a
framework of open constraint satisfaction in a distributed environment, where
constraints can be added on-the-fly. They also consider weighted constraints
to find optimal solutions. In this paper we do not explore strategies to make
the constraint solving process more efficient, such as considering  the order in which the 
rules should be applied. The main differences between our work and theirs are that we
focus on the coordination of third parties, making a clear distinction between
computation and coordination, we use a partial logic, and we
have more modes of interaction. 

CRIME (Consistent Reasoning in a Mobile Environment) is an implementation of
the Fact Space Model~\cite{factspacemodel}, which addresses highly interactive
programs in a constantly changing environment. Applications publish their
\emph{facts} in a \emph{federated fact space}, which is a  tuple space shared by
nearby devices. Each fact is defined as a Prolog-like constraint, and the
federated fact space evolves as other applications connect or disconnect. The
resulting system is a set of reactive objects whose topology is
constantly changing. Many of the fact space model ideas are orthogonal to the
interaction constraints described in this paper, and its implementation 
could form  a possible base platform for our approach.

\subsection*{Conclusion}

The key contributions of our work are the use of 
a local logic which does not require all constraints to be satisfied, 
and the different modes of interaction. Together these  enable more 
concurrency, more flexibility, and more scalability, providing a solid
theoretical basis for constraint satisfaction-based coordination models.
Furthermore,
constraints provide a flexible framework in which it may be possible
to combine  other constraint based notions, such as service-level
agreements. As future work we plan to explore the extension of \reo-based
tools, and to implement an interactive and iterative constraint-solving
process based on the logic described in this paper.  In the process, we 
will introduce rules describing how to manipulate blocks of constraints 
that preserve simple solutions, in order to describe in more detail the 
concurrent search for local firings. Later we plan to explore strategies 
for the application of these rules, and to understand better
the efficiency of our approach.

\bibliographystyle{eptcs}
\bibliography{bibliography}

\end{document}